\newcommand{\Snote}[1]{\footnote{\color{Cerulean}Sam: #1}}
\title{Faster MAX-CUT on Bounded Threshold Rank Graphs}
\author{
Prashanti Anderson\thanks{MIT EECS and CSAIL. Email: \texttt{paanders@mit.edu}. Supported by NSF award no. 2238080.}
\and
Samuel B. Hopkins\thanks{MIT EECS and CSAIL. Email: \texttt{samhop@mit.edu}. Supported by NSF award no. 2238080, MLA@CSAIL, FinTechAI@CSAIL, MIT Research Support Committee, MIT-Google Program for Computing Innovation.}
\and
Amit Rajaraman\thanks{MIT EECS and CSAIL. Email: \texttt{amit\_r@mit.edu}. Supported by a MathWorks Fellowship.}
\and
David Steurer\thanks{ETH Zurich. Email: \texttt{dsteurer@inf.ethz.ch}.}
}
\date{\today}
\begin{document}
\maketitle

\begin{abstract}
  We design new algorithms for approximating 2CSPs on graphs with bounded threshold rank, that is, whose normalized adjacency matrix has few eigenvalues larger than $\eps$, smaller than $-\eps$, or both.
  Unlike on worst-case graphs, 2CSPs on bounded threshold rank graphs can be $(1+O(\eps))$-approximated efficiently.
  Prior approximation algorithms for this problem run in time exponential in the threshold rank and $1/\eps$.
  Our algorithm has running time which is \emph{polynomial} in $1/\eps$ and exponential in the threshold rank of the \emph{label-extended} graph, and near-linear in the input size.
  As a consequence, we obtain the first $(1+O(\eps))$ approximation for MAX-CUT on bounded threshold rank graphs running in $\poly(1/\eps)$ time.
  We also improve the state-of-the-art running time for 2CSPs on bounded threshold rank graphs from polynomial in input size to near-linear via a new comparison inequality between the threshold rank of the label-extended graph and base graph.
  Our algorithm is a simple yet novel combination of subspace enumeration and semidefinite programming.
\end{abstract}

\thispagestyle{empty}
\setcounter{page}{0}
\newpage





\section{Introduction}

Constraint satisfaction problems (CSPs) lie at the heart of algorithms, discrete optimization, computational complexity, and more.
Classic examples like 3-SAT and MAX-CUT appeared in Karp's original list of 21 NP-complete problems; the study of both algorithms and computational complexity of CSPs for the last 50 years has been inextricably linked to breakthroughs like the PCP theorem, constructions of error-correcting codes, algorithm design techniques like linear and semidefinite programming, the dichotomy theorem, and the unique games theory, to name a few.

One fruitful avenue for algorithm development in the face of NP-hardness of general CSPs has been the design of algorithms which perform well on CSPs whose constraints form a graph or hypergraph with some beyond-worst-case structure.
Early examples in this spirit are dynamic programming algorithms for CSPs on bounded tree-width graphs \cite{kumar1992algorithms}, message-passing algorithms inspired by statistical physics for random CSPs \cite{braunstein2005survey}, and greedy algorithms for dense CSPs \cite{arora1995polynomial,MS08}.

A more recent line of work focuses on CSPs whose constraint graphs exhibit spectral structure---one or two-sided expansion, or more generally, having low \emph{threshold rank}, that is, having few large eigenvalues.
This generalizes the dense graph setting, and so far led to the subexponential algorithm for unique games \cite{ABS15}, efficient algorithms for (list-)decoding high-rate error-correcting codes \cite{uniquedecodingtashma, 10.1145/3406325.3451126}, new approximate counting and sampling algorithms \cite{pmlr-v49-risteski16, 10.1145/3313276.3316299}, new regularity lemmas \cite{OGT13}, and more generally the insight that CSPs are computationally tractable when their constraint graphs exhibit ``local to global'' phenomena \cite{BRS11}.
These works focus on 2CSPs to make things conceptually simple; \cite{ajt19} shows that insights from 2CSPs transfer to higher-arity CSPs.
The high-level questions addressed by these works and ours are:
\begin{quote}
    \centering
    \emph{For which MAX-2CSP instances can efficient algorithms find close-to-optimal assignments? What are the fastest algorithms to do so?}
\end{quote}

To go further we need some notation.
To an instance $\phi$ of an $n$-variable 2CSP we associate an $n$-vertex undirected graph $G$ with an edge $i \sim j$ in $G$ iff variables $i$ and $j$ participate in a constraint together in $\phi$.
Let $A$ be the normalized adjacency matrix of $G$, with eigenvalues in $[-1,1]$.
The $\epsilon$ threshold rank of $A$, denoted $\rank_{\geq \eps}(A)$, is the number of eigenvalues of value at least $\eps$.
We discuss some variants of threshold rank below.

\paragraph{Our Contributions.}
Our main contribution is a very simple but powerful approach to minimization of quadratic forms $x^\top A x$ over the Boolean hypercube, solid hypercube, and more generally over arbitrary convex subsets of $\R^n$ admitting an efficient separation oracle, whose running time is parameterized by the number of negative eigenvalues of $A$.
Note that if $A \succeq 0$, then minimization of $x^\top A x$ is a convex problem which can be solved in polynomial time, so running time parameterized by the number of negative eigenvalues of $A$ is sensible.

We apply this approach to obtain two main algorithmic results for 2CSPs on low threshold rank constraint graphs.
The first result concerns perhaps the simplest 2CSP: MAX-CUT.
MAX-CUT on low threshold rank graphs was studied in \cite{BRS11,OGT13}, and the special case of dense graphs was extensively studied in \cite{arora1995polynomial,fernandez1996max,yoshida2014approximation,FLP15,MS08}.

Our first main result is a $(1+O(\eps))$ approximation algorithm for MAX-CUT on $n$-vertex, $m$-edge graphs which runs in time $(1/\eps)^{O(\rank_{\leq -\eps})} \cdot \tilde{O}(m+n)$, where $\rank_{\leq -\eps}$ is the number of eigenvalues of the normalized adjacency matrix less than $-\eps$. 
No prior work for MAX-CUT on graphs satisfying any eigenvalue-based condition obtains a $(1+O(\eps))$ approximation in $\poly(1/\eps)$ time; prior algorithms always incur an exponential dependence on $1/\eps$.

Our second main result is a new $(1+O(\eps))$ approximation algorithm for general 2CSPs with finite alphabet $[q]$, where $q \in \N$, with running time $(1/\eps)^{O(\rank_{\geq \eps^2}/\eps^4)} \cdot \widetilde{O}(m + n)$, where $n$ is the number of variables and $m$ is the number of constraints/clauses.
The previous state-of-the-art algorithm by Barak--Raghavendra--Steurer \cite{BRS11} obtains the same guarantee only in time $\exp(\widetilde{O}(\rank_{\geq \eps^2}/\eps^4)) \cdot (mn)^{O(1)}$, relying on the sum-of-squares hierarchy.
Thus, our algorithm is the first for general 2CSPs on bounded threshold rank graphs to obtain near-linear running time; indeed, this is true even in the setting of expanders, where $\rank_{\geq \eps^2}=1$.\footnote{We discuss below the relationship between our work and algorithms for restricted classes of 2CSPs based on regularity lemmas. In short, while near-linear-time regularity decompositions exist in prior work and give a plausible alternative route to a near-linear time algorithm for general 2CSPs on low threshold rank graphs, no prior work carries this out.}
We discuss below some interpretation of the parameter $\rank_{\geq \eps^2}$, via higher-order Cheeger inequalities \cite{10.1145/2665063}.

\paragraph{Our Techniques, in Brief}
Our algorithms combine two key ingredients: subspace enumeration and semidefinite programming (SDP).
Both have been used extensively in algorithms for CSPs, but the way we combine them is novel.
Prior subspace enumeration algorithms were limited to nearly-satisfiable instances,
and prior SDP-based algorithms (a) did not achieve $\poly(1/\eps)$ running time for MAX-CUT and (b) did not achieve near-linear running time for $(1+O(\eps))$ approximation with respect to $m$ and $n$.
Unlike prior algorithms for CSPs based on SDPs, ours solves a \emph{series} of very simple SDPs, one for each element of a discretization of a subspace constructed from the underlying CSP.

We view the simplicity of our algorithm and analysis as a contribution in itself.
The entire algorithm and analysis for a variant running in $n^{O(1)}$ time rather than $\widetilde{O}(n+m)$ time spans only a few pages, and the running time can be improved to $\widetilde{O}(n+m)$ by appeal to standard matrix multiplicative weights techniques \cite{fastsdp2csps10}.
We manage to avoid the use of higher levels of convex hierarchies such as sum-of-squares.

Our analysis relies on a new inequality relating the threshold rank of $G$ to the threshold rank of the so-called \emph{label-extended} graph of a CSP instance $\phi$.
We defer the formal definition, but in brief, the label-extended graph for a CSP with $n$ variables and alphabet size $q$ is a graph with vertex set $[n] \times [q]$, where vertex $(i,a)$ represents ``variable $i$ is set to value $a$''.
Prior works relating the threshold rank of the label-extended graph to the threshold rank of $G$ either apply only to restricted classes of CSPs or lose polynomial factors in $n$; ours incurs quantitative losses depending only on the threshold value $\eps$ and the alphabet size $q$.
Our proof of this inequality is actually a simple generalization of an argument of \cite{BRS11}, recently reinterpreted in a closely-related setting by \cite{BHSV25}.

\subsection{Results}

\paragraph{Quadratic Optimization over the Hypercube.}
Our first result is an algorithm for quadratic optimization over the hypercube when the coefficient matrix $A$ has few negative eigenvalues. 

\begin{restatable}[Quadratic optimization]{theorem}{quadraticoptimization}
    \label{th2:quadratic-optimization}
	Let $A \in \R^{n \times n}$ be an arbitrary symmetric matrix, scaled to have $\opnorm{A} \le 1$, and let $\OPT = \max_{x \in \{\pm 1\}^n} x^\top A x$.
    Set $k = \rank_{\ge \eps}(A)$.
    Then, there is an algorithm running in time $\poly(n) \cdot \left( \frac{1}{\eps} \right)^{O(k)}$ that outputs a (random) vector $\wh{x} \in \{\pm 1\}^n$ such that
    \[ \E \wh{x}^\top A \wh{x} \ge \OPT - O(\eps n) \mper \]
\end{restatable}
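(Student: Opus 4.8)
\medskip
\noindent\textbf{Proof plan.}
The plan is to combine subspace enumeration over the large eigenspace of $A$ with one very simple convex program (``SDP'') per enumerated point, followed by randomized rounding. First I would split $A$ spectrally as $A = P + A_{\mathrm{mid}} + A_{\mathrm{neg}}$, where $P$ is the restriction of $A$ to the span $W$ of its eigenvectors with eigenvalue $>\eps$ (so $P\succeq 0$ and $\rank P = k$), $A_{\mathrm{mid}}$ collects the eigenvalues in $[-\eps,\eps]$, and $A_{\mathrm{neg}}\preceq 0$ collects those below $-\eps$. Since $\opnorm{A_{\mathrm{mid}}}\le\eps$, we have $|x^\top A_{\mathrm{mid}}x|\le\eps n$ for all $x\in\{\pm1\}^n$, so up to an additive $O(\eps n)$ it suffices to maximize $x^\top P x - x^\top N x$ over the hypercube, where $N:=-A_{\mathrm{neg}}\succeq 0$. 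The whole point of this split is that $x^\top P x$ is a \emph{low-rank} quadratic form, depending only on $\Pi_W x$, while $x\mapsto -x^\top N x$ is \emph{concave} and can therefore be maximized efficiently over convex sets, which is the regime suggested by parameterizing the running time by the number of large eigenvalues.

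Next I would enumerate the low-rank part. For every $x\in\{\pm1\}^n$, $\Pi_W x$ lies in the radius-$\sqrt n$ Euclidean ball of the $k$-dimensional space $W$; taking an $\ell_2$-net $\mathcal N\subseteq W$ of that ball with mesh $\Theta(\eps\sqrt n)$ gives $|\mathcal N| = (1/\eps)^{O(k)}$, and a Cauchy--Schwarz estimate shows that if $\|\Pi_W x - c\|_2\le\Theta(\eps\sqrt n)$ then $|x^\top P x - c^\top P c| = O(\eps n)$ (using $\opnorm{P}\le 1$ and $\|\Pi_W x\|_2,\|c\|_2\le\sqrt n$). For each $c\in\mathcal N$ I would then solve the convex program: maximize $-y^\top N y$ over $y\in[-1,1]^n$ subject to $\|\Pi_W y - c\|_2\le\Theta(\eps\sqrt n)$. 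This is a concave maximization over a convex body with an efficient separation oracle, so it is solvable in $\poly(n)$ time (this is the ``very simple SDP''); call the optimum $y_c$. Finally, round each $y_c$ to a Boolean point by independent rounding, $\Pr[\wh x_i = 1] = \tfrac12(1+(y_c)_i)$, and output the $\wh x$ attaining the largest value of $\wh x^\top A\wh x$ over all $c\in\mathcal N$ (and a few rounding trials).

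For correctness, let $x^\star$ be the true maximizer and $c^\star\in\mathcal N$ the net point nearest $\Pi_W x^\star$. Since $x^\star\in\{\pm1\}^n\subseteq[-1,1]^n$ is feasible for the program at $c^\star$, we get $-y_{c^\star}^\top N y_{c^\star}\ge -(x^\star)^\top N x^\star$, and also $(c^\star)^\top P c^\star\ge (x^\star)^\top P x^\star - O(\eps n)$. It remains to bound the damage from rounding: by Hoeffding's inequality, $|\langle v,\wh x\rangle - \langle v,y_{c^\star}\rangle| = O(\sqrt{\log n})\ll\eps\sqrt n$ simultaneously for the $\le k$ eigenvectors $v$ spanning $W$ with high probability, so $\wh x^\top P\wh x = (c^\star)^\top P c^\star \pm O(\eps n)$; and $\E[\wh x^\top N\wh x] = y_{c^\star}^\top N y_{c^\star} + \sum_i N_{ii}(1-(y_{c^\star})_i^2)$, so rounding increases the concave penalty by at most $\mathrm{tr}(N) = \sum_{i:\lambda_i(A)<-\eps}|\lambda_i(A)|$. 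Combining these estimates yields $\E[\wh x^\top A\wh x]\ge\OPT - O(\eps n) - \mathrm{tr}(N)$, and one needs $\mathrm{tr}(N) = O(\eps n)$ — which is where structure of $A$ beyond $\opnorm{A}\le 1$ enters (e.g.\ $\mathrm{tr}(A) = O(\eps n)$, as holds for a normalized adjacency matrix, forces $\sum_{\lambda<-\eps}|\lambda|\le\sum_{\lambda>0}\lambda + O(\eps n)\le k + O(\eps n)$). The running time is $(1/\eps)^{O(k)}$ convex programs of size $\poly(n)$, i.e.\ $\poly(n)\cdot(1/\eps)^{O(k)}$.

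I expect the last point — controlling the rounding loss on the negative part of $A$ — to be the real obstacle. Independent rounding inflates $-y^\top N y$ by up to $\mathrm{tr}(N)$, and showing this loss is only $O(\eps n)$ requires either a trace/spectral identity bounding the total mass of the negative spectrum of $A$, or a more careful (e.g.\ correlated) rounding exploiting the interplay between $N$ and the enumerated subspace; by contrast the net-discretization and concentration errors are routine. An alternative worth trying is to keep $\wh x$ integral throughout by solving, for each net point, a feasibility/rounding problem for the constraint $\Pi_W x\approx c$ and arguing directly that the best Boolean point consistent with $c^\star$ is near-optimal, but this appears to demand the same control on the negative spectrum.
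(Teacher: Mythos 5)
There is a genuine gap, and it is exactly the one you flag at the end: the $\Tr(N)$ loss from rounding the concave part is not controllable under the theorem's hypotheses, and your proposed patch does not close it. Even after the harmless reduction to $\operatorname{diag}(A)=\mathbf{0}$ (so $\Tr(A)=0$), the best you get is $\sum_{\lambda<-\eps}|\lambda| \le k + \eps n$, which is \emph{not} $O(\eps n)$ in the regime $k \gg \eps n$ (e.g.\ $\eps = n^{-0.9}$, $k = n^{0.5}$), whereas the theorem promises an $O(\eps n)$ additive error for every $k$, with only the running time degrading in $k$. The root cause is structural: by optimizing only $-y^\top N y$ over the solid cube $[-1,1]^n$ you have discarded the Boolean constraint $x_i^2 = 1$, so independent rounding inflates the diagonal terms, and you cannot repair this by folding the diagonal correction into the objective (maximizing $-y^\top(N - \operatorname{diag}(N))y$ is no longer concave). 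So the "alternative worth trying" is needed, not optional.

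The paper's resolution is to keep the full quadratic form and the Booleanness inside one convex program: for each net point $v$ in the top eigenspace, solve the canonical degree-2 SDP (pseudoexpectation $\pE$ with $\pE x_i^2 = 1$) maximizing $\pE\, x^\top A x$ subject to the single extra constraint $\pE\|\Pi x - v\|^2 \le \eps n$, and round by sampling each coordinate independently from the marginals $\pE x_i$. Since $\operatorname{diag}(A)=\mathbf{0}$, the expected rounded value is $(\pE x)^\top A (\pE x)$, so the rounding error is exactly $\iprod{A, \wt{\Cov}}$ with $\wt{\Cov} = \pE xx^\top - (\pE x)(\pE x)^\top \succeq 0$ and $\Tr\wt{\Cov} \le n$. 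Splitting $A$ at threshold $\eps$ then gives $\iprod{A_{<\eps},\wt{\Cov}} \le \eps\,\Tr\wt{\Cov} \le \eps n$ --- note the eigenvalues below $-\eps$ can only \emph{help} here, which is why no bound on the negative spectral mass is ever needed --- and the top part is controlled by the enumeration constraint via a triangle inequality, $\iprod{A_{\ge\eps},\wt{\Cov}} \le 4\,\pE\|\Pi x - v\|^2 \le 4\eps n$. Your net-over-the-top-eigenspace and per-net-point convex program architecture matches the paper, and your in-expectation handling of the middle eigenvalues is fine (the Hoeffding step for the $P$ part is also unnecessary: in expectation the positive part only increases under independent rounding); but the specific convex program must be the SDP relaxation of the \emph{whole} objective with unit-diagonal moment matrix, not a vector program for the negative part alone, and with that change the obstacle you identified disappears entirely.
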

\begin{remark}
    While we state~\cref{th2:quadratic-optimization} for the case of optimization over $\{\pm 1\}$, our main insights can be generalized to optimizing $x^\top A x$ over convex domains with efficient separation oracles when $A$ has few negative eigenvalues. 
\end{remark}

Note that if $A$ has no positive eigenvalues and has a $0$ diagonal, then $\max_{x \in \{ \pm 1\}^n} x^\top A x$ can be performed via maximization of a concave function over a convex set, since it suffices to optimize over $x$ in the solid cube $[-1,1]^n$ and round the result greedily.
Our result can be viewed as showing that even if the optimization problem $\max_{x \in [-1,1]^n} x^\top A x$ is not convex, but the non-convexity is restricted to a small-dimensional subspace, then we can still efficiently find an $\wh{x}$ which is close to optimal.
Intuitively, this is because one can perform limited brute-force search over the span of the positive eigenvectors of $A$, and for each point in the span solve an associated \emph{convex} optimization problem.

\paragraph{MAX-CUT on Low Threshold Rank Graphs.}

One special case of the quadratic optimization problem is when $A$ is the negation of the normalized adjacency matrix of a regular graph.
In this case, we have the MAX-CUT problem. In fact, the ideas in \Cref{th2:quadratic-optimization} also extend to irregular graphs, and the algorithm can be sped up using near-linear-time SDP solvers \cite{fastsdp2csps10}, to obtain the following.

\begin{restatable}[$(1+O(\eps))$ Approximation for MAX-CUT in $\poly(1/\eps)$ Time]{theorem}{maxcut}
    \label{th3:maxcut}
    Let $G$ be a graph on vertex set $[n]$ with $m$ edges, with adjacency matrix $A$ and diagonal degree matrix $D$.
    Let $\OPT$ be the size of the largest cut in $G$.
    Set $k = \rank_{\le -\eps}(D^{-1/2} A D^{-1/2})$.
    In time $ \left( \frac{1}{\eps} \right)^{O(k)} \cdot \wt{O} \left( n + m \right)$, one can output a (random) cut attaining value $\left( 1 - O(\eps) \right) \cdot \OPT$ in expectation.
\end{restatable}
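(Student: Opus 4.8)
The plan is to recast MAX-CUT as a \emph{normalized} quadratic maximization and then run the strategy behind \Cref{th2:quadratic-optimization}, adapted so the additive error is charged against $\OPT$ rather than $n$. Write $L = D - A = D^{1/2}(I - \mathcal{A})D^{1/2}$ where $\mathcal{A} := D^{-1/2}AD^{-1/2}$ has spectrum in $[-1,1]$ and zero diagonal (no self-loops). For $x \in \{\pm 1\}^n$ set $u := D^{1/2}x$; then $\|u\|^2 = \sum_i d_i = 2m$ and the cut value is $\tfrac14 u^\top(I-\mathcal{A})u = \tfrac14(2m - u^\top\mathcal{A}u)$, so maximizing the cut is the same as maximizing $u^\top(-\mathcal{A})u$ over $u \in \{D^{1/2}x : x\in\{\pm1\}^n\}$. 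The matrix $-\mathcal{A}$ has exactly $k = \rank_{\le -\eps}(\mathcal{A})$ eigenvalues at least $\eps$; let $V$ be their span, $\dim V = k$, let $\Pi_V$ be the orthogonal projection onto $V$, and split $-\mathcal{A} = B_+ + B_-$ with $B_+ = \Pi_V(-\mathcal{A})\Pi_V$ and $B_- = \Pi_{V^\perp}(-\mathcal{A})\Pi_{V^\perp} \preceq \eps I$. Then $u^\top B_+ u = (\Pi_V u)^\top(-\mathcal{A})(\Pi_V u)$ depends only on $w := \Pi_V u$, while $u^\top B_- u = \eps\|u\|^2 + u^\top(B_- - \eps I)u$ with $B_- - \eps I \preceq 0$; so, modulo the term $\eps\|u\|^2$ (which is the constant $2m\eps$ on the cube), the non-concave part of the objective lives in the $k$-dimensional space $V$ and everything else is concave in $x$.

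The algorithm computes a basis of $V$ (the eigenvectors of $\mathcal{A}$ with eigenvalue $\le -\eps$, obtainable in $\tilde{O}(k(m+n))$ time by a block Krylov method), forms an $\eps\sqrt{2m}$-net of the radius-$\sqrt{2m}$ ball in $V$, of size $(1/\eps)^{O(k)}$, and for each net point $\tilde w$ solves the relaxation
\[
  \max\bigl\{\ \tilde w^\top(-\mathcal{A})\tilde w + 2m\eps + u^\top(B_- - \eps I)u \ :\ u = D^{1/2}x,\ x\in[-1,1]^n,\ \|\Pi_V u - \tilde w\| \le \eps\sqrt{2m}\ \bigr\},
\]
a concave quadratic maximization over the intersection of a box, a norm ball, and a ball constraint on $\Pi_V u$, which the matrix-multiplicative-weights SDP solver of \cite{fastsdp2csps10} solves approximately in $\tilde{O}(m+n)$ time. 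We round the optimum $x^*$ to $\hat x \in \{\pm1\}^n$ by independent rounding with $\E\hat x_i = x^*_i$, and output the best cut found over all net points. Since $\mathcal{A}$ has zero diagonal, independent rounding is lossless in expectation: with $u^* = D^{1/2}x^*$ one has $\E[\hat u^\top \mathcal{A}\hat u] = u^{*\top}\mathcal{A}u^* + \sum_i \mathcal{A}_{ii}d_i(1-(x^*_i)^2) = u^{*\top}\mathcal{A}u^*$, so $\E[\mathrm{cut}(\hat x)] = \tfrac14(2m - u^{*\top}\mathcal{A}u^*)$.

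For correctness, let $x^\circ$ be an optimal cut, $u^\circ = D^{1/2}x^\circ$, $w^\circ = \Pi_V u^\circ$ (so $\|w^\circ\| \le \sqrt{2m}$), and let $\tilde w$ be the net point nearest $w^\circ$. Then $x^\circ$ is feasible for the $\tilde w$-relaxation, and plugging it in, using $\|u^\circ\|^2 = 2m$ and $|\tilde w^\top(-\mathcal{A})\tilde w - w^{\circ\top}(-\mathcal{A})w^\circ| \le 2\sqrt{2m}\cdot\eps\sqrt{2m} = 4\eps m$, shows the relaxation value is at least $u^{\circ\top}(-\mathcal{A})u^\circ - O(\eps m) = 4\OPT - 2m - O(\eps m)$. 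Conversely, for the relaxation optimum $x^*$, using $\|\Pi_V u^* - \tilde w\| \le \eps\sqrt{2m}$ and $0 \le \|u^*\|^2 \le 2m$ gives $u^{*\top}(-\mathcal{A})u^* \ge (\text{relaxation value}) - O(\eps m)$. Hence $\E[\mathrm{cut}(\hat x)] = \tfrac14(2m + u^{*\top}(-\mathcal{A})u^*) \ge \OPT - O(\eps m) \ge (1 - O(\eps))\OPT$, where the last step uses the trivial bound $\OPT \ge m/2$; rescaling $\eps$ by a constant factor finishes the argument.

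I expect the main obstacle to be the ``last mile'': upgrading the $n^{O(1)}$-time concave program to the near-linear bound and coping with the fact that $V$ is only computed approximately. The first point is handled because the per-net-point program is about as simple as an SDP gets (a concave quadratic over box/ball/subspace constraints), so the fast SDP framework of \cite{fastsdp2csps10} applies essentially out of the box; the second because an approximate $V$ only perturbs the effective threshold and the zero-diagonal corrections by lower-order amounts, absorbed by rescaling $\eps$. The remaining care is bookkeeping the degree normalization so that every additive $O(\eps m)$ loss is charged against $\OPT \ge m/2$ rather than against $n$ --- precisely the place where this proof departs from the additive-$O(\eps n)$ statement of \Cref{th2:quadratic-optimization}.
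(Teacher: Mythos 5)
Your proposal is correct in outline and is, at its core, the same strategy as the paper's: enumerate an $\eps\sqrt{2m}$-net of the span of the eigenvectors of $-D^{-1/2}AD^{-1/2}$ with eigenvalue at least $\eps$, solve one convex program per net point with a constraint tying $\Pi_V D^{1/2}x$ to the net point, round coordinates independently, and charge the $O(\eps m)$ additive loss against $\OPT \ge m/2$. The packaging differs in two ways. First, the paper keeps the full indefinite quadratic objective inside a degree-$2$ pseudoexpectation SDP and bounds the rounding error $\langle E^{1/2} M E^{1/2}, \wt{\Cov}\rangle$ by splitting across eigenspaces and using the constraint $\pE \| \Pi E^{1/2} y - v \|^2 \le \eps \Tr D$ (\cref{lem:error-bound}, \cref{th:main-body}); you instead make each per-net-point program \emph{concave} by freezing the top-eigenspace contribution at $\tilde w^\top(-\mathcal{A})\tilde w$ and subtracting $\eps \|u\|^2$, so that independent rounding of the fractional optimizer is exactly lossless in expectation (zero diagonal) and all error sits in the relaxation comparison. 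Your correctness chain --- feasibility of the optimal cut for the nearby net point, the $O(\eps m)$ net-substitution error, and the reverse comparison via $\|\Pi_V u^* - \tilde w\| \le \eps\sqrt{2m}$ and $0 \le \|u^*\|^2 \le 2m$ --- checks out. Second, the paper deliberately routes MAX-CUT through the label-extended/Kronecker form $M = -D^{-1/2}AD^{-1/2}\otimes\begin{pmatrix}1 & -1\\ -1 & 1\end{pmatrix}$, $E = D\otimes \Id_2$, precisely so that the objective remains the quadratic form of a label-extended 2CSP and the near-linear-time solver of \cite{fastsdp2csps10} applies after adding only one extra constraint, whose width is shown to be small and which can be monitored from the solver's sketch (\cref{thm:near-linear-2csp-label-extended}).

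That second point is where your argument is under-justified: you assert that the matrix-multiplicative-weights solver of \cite{fastsdp2csps10} handles your per-net-point program ``essentially out of the box,'' but your objective matrix is $D^{1/2}(B_- - \eps I)D^{1/2}$ --- no longer a CSP/label-extended adjacency form --- and your variable is a box-constrained vector rather than the trace-normalized PSD matrix that framework optimizes over, so its width/normalization and sketching analysis does not literally apply. This is fixable (pass to the degree-$2$ SDP of your program and use that $\wt{\Cov}\succeq 0$ makes the concave objective only smaller under pseudoexpectations, or, as the paper does, keep the original objective and add only the low-width ball constraint), and your approximate-eigenspace remark additionally needs the observation that a nearly-invariant $V$ leaves cross terms $\Pi_V(-\mathcal{A})\Pi_{V^\perp}$ of operator norm $O(\eps)$, contributing only $O(\eps m)$. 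Neither issue is a conceptual error, but they are exactly the implementation details the paper's near-linear-time section exists to handle, so they cannot simply be waved through.
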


We now compare \cref{th3:maxcut} to prior work.
The problem of MAX-CUT on low threshold rank graphs has been studied in several prior works, with various runtimes and notions of threshold rank. These are summarized in the table below. Note that all parameters $k$ are defined with respect to the normalized adjacency matrix.
The algorithms discussed in the table all produce $(1+O(\eps))$ approximations to the maximum cut.

\begin{table}[ht]
\begin{center}
\caption{\label{tab:threshold-rank}MAX-CUT on Low Threshold Rank Graphs}
\begin{tabular}{| c | c | c | } 
\hline
Paper & Runtime & Parameter $k$ \\
\hline
\cite{BRS11} & $2^{O(k/\eps^4)} \cdot n^{O(1)}$ & \# of eigs in $[\eps^2,1]$ \\
\cite{GS11} & $n^{O(k/\eps^2)}$ &  \# of eigs $\le -\eps$ \\
\cite{GS12} & $2^{k/\eps^3}\cdot n^{O(1/\eps)}$ & \# of eigs $\leq -\eps^2$\\
\cite{OGT13} + \cite{10.1145/3406325.3451126}\tablefootnote{\cite{OGT13} provides the running time $2^{\wt{O}(k^{1.5}/\eps^3)} + \poly(n)$ via a regularity decomposition of the graph. Using matrix multiplicative weights to find the regularity decomposition via a fast SDP solver yields the running time we quote. Matrix multiplicative weights-based SDP solvers appear first in \cite{arora2016combinatorial}; the variant for regularity decompositions appears in \cite{uniquedecodingtashma}.} & $2^{\wt{O}(k^{1.5}/\eps^3)} + \wt{O}(n + m)$ & sum of squares of eigs not in the range $[-\eps, \eps]$  \\
\hline
This work & $\left( \frac{1}{\eps} \right)^{O(k)} \cdot \wt{O} \left( n + m \right) $ & \# of eigs $\le -\eps$\\
\hline
\end{tabular}
\end{center}
\end{table}

Similar to \cite{GS11,GS12}, our runtime is parameterized by the number of eigenvalues of the (normalized) adjacency matrix which are smaller than $-\poly(\eps)$.
However, compared to \cite{GS11,GS12}, our running time has an exponentially better dependence on $1/\eps$.
Regularity lemma-based algorithms for MAX-CUT on low threshold rank graphs such as \cite{OGT13} are instead parameterized with $k$ being as the sum of the squares of eigenvalues not in the range $[-\eps, \eps]$. Since all the eigenvalues of the normalized adjacency matrix lie in the range $[-1,1]$, this could be much smaller than the number of such eigenvalues.
However, the sum of the squares of the eigenvalues not in the range $[-\eps, \eps]$ is at least $O(\eps^{2}) \cdot \left(\text{\# of eigenvalues not in the range }[-\eps, \eps]\right)$; we also require only control on the negative eigenvalues of $A$ rather than positive and negative eigenvalues.
As a result, our exponential dependence on $k$ and $\eps$ also improve on regularity lemma-based approaches~\cite{OGT13, 10.1145/3406325.3451126}, although we incur an multiplicative dependence in runtime on the $\wt{O} \left( n + m \right)$, $\left( \frac{1}{\eps} \right)^{O(k)}$ terms, as opposed to an additive dependence. 

\paragraph{General 2CSPs.} In addition to fast algorithms for MAX-CUT, we give the first near-linear time algorithm for general CSPs on low threshold rank graphs.

\begin{restatable}{theorem}{arbitrarycsps}
    \label{th1:arbitrary-csps}
    Let $G$ be a graph on $[n]$ with $m$ edges, with adjacency matrix $A$ and degree diagonal $D$.
    For each edge $ij$ in $G$, let $\varphi_{ij} : [q]^2 \to \{0,1\}$ be a predicate on alphabet size $q$ that is not identically $0$.
    Also let $\OPT = \max_{x \in [q]^n} \Phi(x) \defeq \max_{x \in [q]^n} \sum_{ij \in E(G)} \varphi_{ij}(x_i,x_j)$.
    Set $k = \rank_{\ge \eps^2} (D^{-1/2} A D^{-1/2})$.
    In time $\left( \frac{1}{\eps} \right)^{ O \left( k / \eps^4 \right) } \cdot \wt{O} \left( n + m\right) \cdot \poly(q)$, one can find a random assignment $\wh{x}$ with $\E \Phi(\wh{x}) \ge \OPT - O(\eps q m)$.
\end{restatable}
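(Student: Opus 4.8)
The plan is to encode $\Phi$ as a quadratic form on the \emph{label-extended graph} of $\Phi$, reduce its maximization to a convex-domain instance of the generalization of \cref{th2:quadratic-optimization} mentioned in the remark, and control the relevant threshold rank through a comparison inequality relating the label-extended graph to $G$. For the reduction, encode an assignment $x \in [q]^n$ by $\hat y \in \{0,1\}^{[n] \times [q]}$ with $\hat y_{(i,a)} = \mathbf 1[x_i = a]$; relaxing, $\hat y$ ranges over the polytope $\Omega_0 \defeq \prod_{i \in [n]} \Delta_{[q]}$ of products of simplices, which admits an efficient separation oracle. Let $\hat A$ be the symmetric matrix on $[n] \times [q]$ whose $(i,j)$-block equals $(\varphi_{ij}(a,b))_{a,b \in [q]}$ and whose $(j,i)$-block equals the transpose, for every edge $ij$ of $G$; this is the adjacency matrix of the label-extended graph $\hat G$, which has at most $q^2 m$ edges and $nq$ vertices. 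For integral $\hat y$ one has $\hat y^\top \hat A\, \hat y = 2 \Phi(x)$, and because $G$ has no self-loops this form is multilinear across the $n$ blocks of $\hat y$; hence rounding an arbitrary fractional $\hat y \in \Omega_0$ to an assignment $\wh x$ by drawing each $\wh x_i = a$ independently with probability $\hat y_{(i,a)}$ is exactly unbiased, $\E \Phi(\wh x) = \tfrac12 \hat y^\top \hat A\, \hat y$, while $\max_{\hat y \in \Omega_0} \hat y^\top \hat A\, \hat y = 2\,\OPT$. So it suffices to approximately maximize $\hat y \mapsto \hat y^\top \hat A\, \hat y$ over $\Omega_0$. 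Writing $\hat D$ for the degree matrix of $\hat G$ and substituting $z = \hat D^{1/2} \hat y$, this becomes the problem of maximizing $z^\top M z$ over the polytope $\Omega \defeq \hat D^{1/2} \Omega_0$, where $M \defeq \hat D^{-1/2} \hat A \hat D^{-1/2}$ satisfies $\opnorm{M} \le 1$. Every feasible $z$ has $\|z\|_2^2 = \sum_{(i,a)} \hat d_{(i,a)} \hat y_{(i,a)}^2 \le \sum_i \max_a \hat d_{(i,a)} \le q \sum_i \deg_G(i) = O(qm)$, so the additive error produced by the convex-domain version of \cref{th2:quadratic-optimization} --- which on a domain of Euclidean radius $R$ is $O(\eps R^2)$ --- is $O(\eps q m)$, matching the statement.

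\textit{The threshold rank comparison.}
Applied to $(M, \Omega)$, the convex-domain version of \cref{th2:quadratic-optimization} runs in time $(1/\eps)^{O(\rank_{\ge \eps}(M))}$ times a near-linear factor, so it remains to bound $\rank_{\ge \eps}(M)$ in terms of $k = \rank_{\ge \eps^2}(D^{-1/2} A D^{-1/2})$. This is the comparison inequality, which I would establish via the trace/closed-walk argument of \cite{BRS11} (recently reinterpreted by \cite{BHSV25}): $\hat A$ is nonnegative and entrywise dominated by the $q$-fold blow-up $A \otimes J_{[q]}$ of $A$ (with $J_{[q]}$ the $[q] \times [q]$ all-ones matrix), so the number of closed walks of any given length in $\hat G$ is at most that in the blow-up, whose normalized adjacency spectrum equals that of $D^{-1/2} A D^{-1/2}$ together with additional zeros; combined with the fact that having few large positive eigenvalues constrains the rest of the spectrum of $D^{-1/2} A D^{-1/2}$ through the trace identity, this gives a bound on $\rank_{\ge \eps}(M)$ in terms of $k$, $q$, and $\eps$ with --- crucially --- no dependence on $n$. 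Feeding that bound into the $(1/\eps)^{O(\rank_{\ge \eps}(M))}$ factor yields the claimed exponent $O(k/\eps^4)$ together with the separate multiplicative $\poly(q)$.

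\textit{Assembling the algorithm.}
The convex-domain version of \cref{th2:quadratic-optimization} applied to $(M, \Omega)$ enumerates a $(1/\eps)$-net of the span of the eigenvectors of $M$ with eigenvalue at least $\eps$; for each net point it pins the image of $z$ under the corresponding spectral projection to that point and solves a single simple SDP over $\Omega$ --- a relaxation tight up to $O(\eps q m)$, since the remaining eigenvalues of $M$ lie below $\eps$, and solvable by matrix multiplicative weights \cite{fastsdp2csps10} in near-linear time $\wt O(n + m) \cdot \poly(q)$ in the size of $\hat G$. Returning the best fractional point found and applying the unbiased independent rounding from the reduction produces a random assignment $\wh x$ with $\E \Phi(\wh x) \ge \OPT - O(\eps q m)$ in the claimed total running time.

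\textit{Main obstacle.}
The crux is the comparison inequality --- obtaining a bound on $\rank_{\ge \eps}(M)$ with \emph{no} dependence on $n$ (prior comparisons either restrict the class of predicates or lose $\poly(n)$), which forces the closed-walk count, and in particular the bookkeeping between normalizing $\hat A$ by the true degree matrix $\hat D$ versus by $q D \otimes I$, to be carried out so that every factor lost is a function of $q$ and $\eps$ alone --- together with extracting from it exactly the stated running time, i.e. with the $q$-dependence confined to the multiplicative $\poly(q)$ rather than entering the exponent.
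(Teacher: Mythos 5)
Your overall route mirrors the paper's: encode the instance via the label-extended graph, enumerate a net of the top eigenspace and solve one simple SDP per net point, round independently from the marginals, speed up with matrix multiplicative weights, and reduce everything to a comparison between the threshold rank of the label-extended matrix and that of $D^{-1/2}AD^{-1/2}$. You also correctly identify that comparison (with no dependence on $n$) as the crux. The gap is in how you propose to prove it. A trace/closed-walk argument cannot give an $n$-independent bound here: entrywise domination of the label-extended matrix by the blow-up $A\otimes J_{[q]}$ only lets you bound even-power traces, $\Tr(M^{2\ell}) \le \sum_i \lambda_i(D^{-1/2}AD^{-1/2})^{2\ell}$, and the hypothesis $\rank_{\ge \eps^2}(D^{-1/2}AD^{-1/2}) \le k$ says nothing about the \emph{negative} spectrum of the base graph, which even powers count with the same sign. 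The trace identity $\sum_i \lambda_i = 0$ only yields $\sum_{\lambda_i<0}|\lambda_i| \le k + \eps^2 n$, so the best you can extract is a bound of the shape $\rank_{\ge \eps}(M) \lesssim (k+\eps^2 n)/\eps^{O(\ell)}$, and no choice of $\ell$ removes the $n$ term without blowing up the $k$ term --- this is precisely the $\poly(n)$ loss of prior comparisons that the theorem is designed to avoid. The paper's Theorem 4.1 circumvents this with a different mechanism (adapted from \cite{BHSV25}, in turn from \cite{BRS11}): from $t$ orthonormal eigenvectors of $B$ with eigenvalue at least $\lambda$ one forms the \emph{PSD} test matrix $V^\top V$ whose columns are the normalized tensor squares $w_i^{\otimes 2}/\|w_i\|$ of the rows of the eigenvector matrix; a Cauchy--Schwarz step using $|B_{ij}|\le A_{ij}$ and $\|A\|\le 1$ gives $\iprod{A, V^\top V}\ge \lambda^2$ while $\Tr(V^\top V)=1$ and $\|V^\top V\|_F^2 \le 1/t$. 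Because $V^\top V \succeq 0$, the negative eigenvalues of $A$ can only decrease $\iprod{A,V^\top V}$, so comparing against the at most $s$ eigenvalues of $A$ above $\tau$ forces $t \le s/\sigma^2$. That sign-awareness is exactly what moment counting lacks, so your proposed proof of the comparison inequality does not go through as stated.

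A secondary but real issue is your choice of normalization. After normalizing the label-extended adjacency by its \emph{true} degree matrix $\hat D$, entrywise domination by the normalized blow-up fails (one only has $\hat D_{(i,a)} \le q D_i$, which makes the normalized entries \emph{larger}, and $\hat D$ may even have zero entries), so even the first step of your comparison breaks for $\hat D^{-1/2}\hat A \hat D^{-1/2}$. The paper sidesteps this by running the algorithm with the base-graph normalization $E = qD\otimes \Id_q$ and $M = \tfrac1q D^{-1/2}AD^{-1/2}$-weighted label-extended matrix, for which $\|M\|\le 1$, the error is $O(\eps \Tr(qD)) = O(\eps q m)$, and the domination hypothesis of the comparison theorem holds verbatim; if you keep your $\hat D$ normalization you would need a separate argument relating its spectrum to the base graph.
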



The main prior work that gave algorithms for general CSPs under threshold rank conditions is~\cite{BRS11}. Our work matches the exponential dependence on $k, \eps$ (up to factors of $\log(1/\eps)$) of~\cite{BRS11} while improving the polynomial dependence on $n$ and $m$ to near-linear.
The main barrier to achieving a near-linear running time via prior algorithmic techniques was that the algorithmic approaches that lead to $\wt{O}(n+m)$ time (e.g., regularity lemmas \cite{OGT13,10.1145/3406325.3451126}) also lead to exponential dependence on the threshold rank of the label-extended graph (see~\cref{sec:prelim}).
Our main technical contribution to overcome this barrier is to bound the threshold rank of the label-extended graph in terms of the threshold rank of $G$; combining this inequality with fast SDP solvers using matrix multiplicative weights leads to the algorithm of~\cref{th1:arbitrary-csps}.

\paragraph{Which Graphs Have Small Threshold Rank?}
As noted in prior work, many interesting families of graphs have low threshold rank.
Random graphs with degree $d$ have threshold rank $1$ for $\eps > c/\sqrt{d}$; more generally, one-sided spectral expanders have positive threshold rank $1$, and two-sided spectral expanders have negative threshold rank equal to zero.

What about graphs beyond expanders?
Focusing on the positive threshold rank case, graphs in which all small sets are highly expanding have bounded threshold rank \cite{steurer2010subexponential}.
And higher-order Cheeger inequalities like \cite{10.1145/2665063} say that if a graph has no sparse $k$-way cut, then the $k$-th largest eigenvalue of the normalized adjacency matrix is bounded from above, which yields a bound on the threshold rank.
Thus, our algorithm for 2CSPs is efficient on graphs satisfying this combinatorial property.

Which graphs have small negative threshold rank---the setting of our MAX-CUT algorithm---is less explored.
Intuitively, a graph $G$ with small $\rank_{\leq -\eps}(A)$ should have ``few'' cuts which cut significantly more than half of the edge set.
But to our knowledge, \emph{dual} Cheeger inequalities like \cite{doi:10.1137/090773714} and higher-order dual Cheeger inequalities like \cite{LIU2015306} only describe combinatorial properties of $G$ which imply bounds on $\rank_{\leq -(1-\eps)}(A)$, the number of eigenvalues close to $-1$, which could be much smaller than the number of eigenvalues less than $-\eps$.
Thus we pose the following problem: \emph{provide a combinatorial characterization of graphs $G$ with few eigenvalues less than $-\eps$.}

Returning to general 2CSPs, if we apply a standard reduction \cite{10.1109/CCC.2010.20} from a general 2CSP to a quadratic optimization problem, we get running times in terms of the threshold rank of the so-called \emph{label-extended} graph of the CSP.
One of our key technical insights is in relating the threshold rank of this graph to the threshold rank of the constraint graph $A$.
We discuss this in more detail in~\cref{sec:label-extended-threshold-rank}. A separate argument shows that for dense graphs $G$, the threshold rank of the label-extended graph is at most $O(q^2/\eps^2)$, meaning that our algorithm also recovers the best known running time for a PTAS for dense 2CSPs on constant-sized alphabets~\cite{yaroslavtsev2014going,yoshida2014approximation}.

\subsection{Techniques}

\paragraph{Algorithm for Boolean Quadratic Optimization and MAX-CUT.}
Our algorithm is a novel combination of prior SDP-based approaches and subspace enumeration techniques. We will sketch the proof of~\cref{th2:quadratic-optimization} in the case where $A$ has $\operatorname{diag}(A) = \mathbf{0}$.\footnote{This assumption is without loss of generality, since the diagonal contributes a constant amount to the objective on any $x \in \{\pm 1\}^n$.}
Recall that the canonical SDP relaxation of the problem $\max_{x \in \{\pm 1\}^n} x^\top A x$ is as follows:
\[
\begin{aligned}
    \max_{X \in \R^{n \times n}, X \succeq 0} \quad & \langle A, X \rangle \\
    \text{s.t.} \quad & \operatorname{diag}(X) = \mathbf{1}.
\end{aligned}
\]
Equivalently, for reasons we will see in a moment, we could optimize over $X \in \R^{n+1 \times n+1}$:
\[
\begin{aligned}
    \max_{X \in \R^{n+1 \times n+1}, X \succeq 0} \quad & \langle A, X_{[1,n+1],[1,n+1]} \rangle \\
    \text{s.t.} \quad & \operatorname{diag}(X) = \mathbf{1} \, ,
\end{aligned}
\]
where $X_{[1,n+1],[1,n+1]}$ is the bottom right $n\times n$ submatrix of $X$.

A solution to the second relaxation can be interpreted as an linear operator $\pE \, : \, \R[x_1,\ldots,x_n]_{\leq 2} \rightarrow \R$, which maps polynomials in $x_1,\ldots,x_n$ of degree at most $2$ to scalars; we call $\pE$ a \emph{pseudoexpectation}: we can take $\pE x_i = X_{0,i}$ and $\pE x_i x_j = X_{i,j}$ and extend linearly.
In this language, the SDP above becomes $\max_{\pE} \pE \iprod{x, Ax}$.
While we will not use the full machinery associated with this point of view and the corresponding Sum-of-Squares semidefinite hierarchy, we still use the pseudoexpectation language for ease of exposition.
Note that the PSD constraint ensures that the pseudoexpectation of any square polynomial is non-negative, and notably that $\pE xx^\top - (\pE x)(\pE x)^\top$ is a PSD matrix.

As a first attempt at an algorithm for quadratic optimization, consider the following naive rounding scheme applied to the optimal $\pE$ -- 
round each $x_i$ independently by flipping a $\pm 1$-valued coin with expectation $\pE x_i$.
Of course, by symmetry there is an optimal solution $\pE$ for which $\pE x_i = 0$ for all $i$, meaning that this rounding scheme will not give any nontrivial guarantees unless we modify the SDP.
But, to see how to modify the SDP, it will be helpful to further analyze this rounding scheme.

Note that the solution produced has (in expectation) objective value $(\pE x)^\top A (\pE x)$ and thus we can write difference between the objective value of the SDP and the rounded objective value as:
\[ \langle A, \pE xx^\top \rangle - \langle A, (\pE x) (\pE x)^\top \rangle = \langle A, \wt{\Cov} \rangle\,,\]
where $\wt{\Cov} = \pE xx^\top - (\pE x)(\pE x)^\top$.
We call this quantity the \emph{rounding error}.
Our goal is to bound it from above.

We can decompose the rounding error in two parts: that incurred on the subspace spanned by the eigenvalues of $A$ that are less than $\eps$ (which we refer to as $A_{< \eps}$), and that incurred on the part of $A$ with eigenvalues at least $\eps$ (which we refer to as $A_{\ge\eps}$): 
\[
    \langle A, \wt{\Cov} \rangle = \iprod{A_{< \eps}, \wt{\Cov}} + \iprod{A_{\ge \eps}, \wt{\Cov}} \,.
\]

First off, the $\langle A_{<\eps}, \wt{\Cov} \rangle$ part of the rounding error cannot be too large for any SDP solution $\pE$.
Since $\wt{\Cov}$ is positive semidefinite, we can bound $\iprod{A_{< \eps}, \wt{\Cov}} \leq \eps \cdot \Tr(\wt{\Cov}) \leq \eps n$.
Thus, for the rounding to be successful -- incurring error at most $O(\eps n)$ -- we only require that the SDP solution $\pE$ satisfies $\langle A_{\ge \eps}, \wt{\Cov} \rangle \leq \eps n$. 
However, in general, the SDP solution associated with this optimization problem might not have this property.

This brings us to a modified SDP relaxation of $\max_{x \in \{ \pm 1\}^n} x^\top A x$.
Suppose that we knew a vector $v$ such that $\|v - \Pi x^*\|^2 \leq \eps n$, where $\Pi$ is the projector to image of $A_{\geq \eps}$ and $x^*$ is an optimal solution to $\max_{x \in \{ \pm 1 \}^n} x^\top A x$.
This is not too much to assume, since by enumerating over at most $(1/\eps)^{O(\rank_{\geq \eps}(A))}$ vectors $v$ in the subspace $\text{Im}(A_{\geq \eps})$, we can guess such a $v$.
Then the following SDP would still be a relaxation:
\begin{equation}
    \label{eq:techniques-1}
    \max \pE \iprod{x, Ax} \text{ s.t. } \pE \|\Pi x - v\|^2 \leq \eps n \, .
\end{equation}

Crucially, if $\pE$ is any pseudoexpectation for which there exists $v$ such that $\pE \| \Pi x - v \|^2 \leq \eps n$, then
\[
    \iprod{A_{\geq \eps}, \wt{\Cov}} = \iprod{A_{\geq \eps}, \Pi \wt{\Cov} \Pi} \leq \|A_{\geq \eps}\| \cdot \Tr \Pi \wt{\Cov} \Pi \leq \pE \|\Pi (x - \pE x)\|^2 \, .
\]
Furthermore, using triangle inequality and PSD-ness of $\pE$, we have
\[
    \pE \| \Pi(x - \pE x)\|^2 \leq 2 (\pE \|\Pi x - v\|^2 + \|v - \pE \Pi x\|^2) \leq 4 \pE \|\Pi x - v\|^2 \, .
\]
Putting it together, we get that for any feasible $\pE$ for \eqref{eq:techniques-1}, the rounding error is at most $O(\eps n)$.

This brings us to our final algorithm for Boolean quadratic optimization.
Given $A$, enumerate the image of $A_{\geq \eps}$.
For each $v$ in the enumeration, solve \eqref{eq:techniques-1} and round by sampling coordinates independently.
Output the best solution found in this way.

Since (up to an additive shift and rescaling) MAX-CUT on regular graphs is just $\max x^\top (-A) x$, where $A$ is the normalized adjacency matrix, this gives our $\poly(1/\eps)$ time algorithm for MAX-CUT on regular graphs with few negative eigenvalues (\cref{th3:maxcut}). We note that this argument can be generalized to irregular graphs and we defer this to~\cref{sec:quadratic-opt}.



\paragraph{Algorithm for General 2CSPs, and Threshold Ranks of Label-Extended Graphs}
Now we move on to our algorithm for general 2CSPs with alphabet size $[q]$.
It is by now standard to associate an $n$-variable 2CSP instance $\varphi$ with variables $x_1,\ldots,x_n$ and constraints $\varphi_{ij} \, : \, [q] \times [q] \rightarrow \{0,1\}$ a \emph{label-extended graph} $M$ on vertex set $[n] \times [q]$, where $(i,a)$ is adjacent to $(j,b)$ if $x_i = a, x_j = b$ satisfies $\phi_{ij}$ \cite{10.1109/CCC.2010.20}.
Using largely the same algorithm and argument as in the Boolean case, we can obtain a $(1+O(\eps))$ approximation algorithm in time $(1/\eps)^{O(\rank_{\geq \eps}(M))} \cdot (n)^{O(1)}$.

The prior work \cite{BRS11} gives a $(1+O(\eps))$ approximation algorithm for 2CSPs with running time depending only on the threshold rank of the base graph $G$; their algorithm runs in time $\exp(\tilde{O}(\rank_{\geq \eps^2}(G)/\eps^4) (n)^{O(1)}$.
We recover this running time (and later, unlike \cite{BRS11}, speed up to near-linear time) by proving a new inequality relating $\rank_{\geq \eps}(M)$ to $\rank_{\geq \eps^2}(G)$.
Namely, we show that $\rank_{\geq 2 q \eps}(M) \leq \rank_{\geq \eps^2}(G)/\eps^4$ (\cref{cor:label-extended-threshold-rank-bound}).
We actually observe that this bound follows from a small adaptation of an argument by \cite{BHSV25}, itself adapted from \cite{BRS11}.

\paragraph{From Polynomial to Near-Linear Time}
The main contributor to the running time of our algorithm is the time to solve the SDPs. 
To speed this up from $(mn)^{O(1)}$ time to $\tilde{O}(m+n)$, we rely on the matrix multiplicative weights framework as developed in \cite{fastsdp2csps10}, which gives a near-linear time SDP solver for the basic SDP relaxation of CSPs.
To adapt this approach to our setting, we just need to enforce the additional constraint $\pE \| \Pi x - v\|^2 \leq \eps n$ when solving the SDP.
This turns out not to be too difficult.
The key quantity governing the difficulty of enforcing a constraint in the multiplicative weights framework is its ``width''; in this case, the operator norm of the matrix which forms the constraint.
Enforcing this extra constraint turns out to be possible with only $\poly(q/\eps)$ width, meaning that the framework of \cite{fastsdp2csps10} can solve our SDPs in time $\poly(q/\eps) \cdot \tilde{O}(m+n)$.

\subsection{Related Work}

\paragraph{SDP-based algorithms.}
Semidefinite programming and SDP hierarchies have been used to give algorithms for CSPs under varying threshold rank conditions~\cite{BRS11,GS11, GS12}. The work of~\cite{BRS11} achieves similar approximation guarantees to that of our work, but incurs a polynomial as opposed to near-linear dependence in runtime on the size of the input. For restricted classes of CSPs (including MAX-CUT), the work of Guraswami--Sinop \cite{GS11, GS12} yields algorithms with similar approximation guarantees. Furthermore, in the case of 3-coloring, recent work by Hsieh utilizes SDP-based approaches to find proper 3-colorings on almost half the edges in 3-colorable graphs with small threshold rank~\cite{hsieh2025coloring3colorablegraphslow}.

\paragraph{Subspace enumeration algorithms.}
Subspace enumeration has been an influential algorithm design technique for CSPs, specifically Unique Games, and related problems such as small set expansion~\cite{10.1109/CCC.2010.20, ABS15}. These works give algorithms whose runtime depends on the threshold rank of the \emph{label-extended} graph. A key challenge in these approaches is bounding this quantity, since the label-extended graph having bounded threshold rank is a stronger condition than the constraint graph having bounded threshold rank. We discuss threshold rank bounds further in the paragraph below.

\paragraph{Regularity lemmas.}
Regularity lemmas have been a key tool for giving fast algorithms for constraint satisfaction and related problems. Their use dates back to early approximation schemes for dense CSPs in the 90s and early 2000s~\cite{fk1999, 10.1145/509907.509945}, and more recent work extended these ideas by giving a regularity lemma for graphs with low threshold rank~\cite{OGT13}. Extensions of this work to higher-arity CSPs have also been influential in the design of fast decoders for error correcting codes with high rate~\cite{10.1145/3406325.3451126}.

\paragraph{Dense CSPs.} Many prior works have studied approximation schemes for CSPs on graphs which are either dense or pseudo-dense~\cite{fk1999, 10.1145/509907.509945, doi:10.1137/070709529,doi:10.1137/080730160}. Note that all dense and psuedo-dense graphs have low threshold rank, and thus our results and prior results on approximation schemes for graphs with low threshold rank apply to a larger class of instances. 

\paragraph{Dual higher-order Cheeger inequalities.}

Higher-order dual Cheeger inequalities \cite{doi:10.1137/090773714, LIU2015306} give a relationship between the magnitude of $\lambda_{n-k}$ and combinatorial properties of the graph. In particular, they can be used to show that graphs with this combinatorial property---that all vertex-disjoint non-empty subsets $V_1, \ldots, V_k$ must have at least one $V_i$ which is far from being a bipartite connected component of the original graph---have eigenvalues other than $\lambda_{n-k}, \ldots, \lambda_n$ bounded away from $-1$. However, this does not yield a combinatorial class of graphs which our algorithm runs fast on since higher-order dual Cheeger inequalities are too quantitatively loose to give bounds on $\rank_{\le -\eps}$.\footnote{This is due to the fact that the inequality becomes trivial when considering eigenvalues which are negative but very close to $0$.}

\paragraph{Threshold rank bounds.} As discussed above, a line of prior work yielded algorithms for Unique Games which runs in time parameterized by the threshold rank of the label-extended graph~\cite{10.1109/CCC.2010.20, ABS15}. Such work also studied the relationship between the threshold rank of the constraint graph and the threshold rank of the label-extended graph. However, existing prior work all lose polynomial factors in $n$~\cite{ABS15}. 


\section{Preliminaries}
\label{sec:prelim}

\paragraph{Threshold Rank and Label Extension.} We first formally define the threshold rank of a matrix as well as the label extended graph of a CSP.

\begin{definition}[Threshold Rank]
    Let $A \in \R^{n \times n}$ be a matrix with real eigenvalues, and let $0 < \eps < 1$. The (one-sided) $\eps$-threshold rank of $A$, denoted $\rank_{\geq \eps}(A)$ or $\rank_{\le -\eps}$, is the number of eigenvalues of $A$ which are at least $\eps$ or less than $-\eps$ respectively. 
\end{definition}

\begin{definition}[Label Extended Graph]
    Let $G$ be a graph on $[n]$ with adjacency matrix $A$. For each edge $ij$ in $G$ let $\varphi_{ij} : [q]^2 \to \{0,1\}$ be a predicate on alphabet size $q$. The label extended graph $G'$ is a graph with vertices $\{v_{i, \alpha}\}_{i \in [n], \alpha \in [q]}$ such that there is an edge between $(i,\alpha)$ and $(j, \beta)$ if and only if $ij \in G$ and $\phi_{ij}(\alpha,\beta) = 1$.
\end{definition}

\paragraph{Semidefinite programming and pseudodistributions.}

Although we do not use the full power of the sum-of-squares programming hierarchy, we will still use the pseudoexpectation notation for ease of exposition.

\begin{definition}[Pseudodistribution]
    A \emph{pseudodistribution} $D$ of degree $t$ is a function from $\mathbb{R}^n$ to $\mathbb{R}$ with finite support such that $\sum_{x \in \operatorname{supp}(D)} D(x) = 1$ and $\sum_{x \in \operatorname{supp}(D)} D(x) p(x)^2 \geq 0$ for all polynomials $p(x)$ with $\operatorname{deg}(p(x)^2) \leq t$.
\end{definition}

\begin{definition}[Pseudoexpectation]
    Given a pseudodistribution $D$ of degree $t$, the associated \emph{pseudoexpectation} $\pE_{D}$ is the linear operator from the space of functions to $\R$ that is defined by $f$ mapping to $\pE_{D} f(x) \defeq \sum_{x \in \operatorname{supp}(D)} D(x) f(x)$.
\end{definition}

The non-negativity of the pseudoexpectation on squared polynomials implies the following easy fact.

\begin{fact}
    Let $\pE$ be a degree $t$ pseudoexpectation and let its pseudocovariance matrix be $\wt{\Cov} = \pE xx^\top - \left(\pE x \right)\left(\pE x\right)^\top$. Then $\wt{\Cov} \succeq 0$.
\end{fact}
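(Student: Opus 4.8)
The plan is to verify positive semidefiniteness directly from the definition: it suffices to show that $a^\top \wt{\Cov}\, a \ge 0$ for every fixed $a \in \R^n$, and the natural route is to recognize $a^\top \wt{\Cov}\, a$ as the pseudoexpectation of a perfect square, which is non-negative by the defining property of a pseudodistribution.

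First I would use linearity of $\pE$ to rewrite, for an arbitrary $a \in \R^n$,
\[
a^\top \wt{\Cov}\, a \;=\; a^\top \bigl(\pE xx^\top\bigr) a - a^\top (\pE x)(\pE x)^\top a \;=\; \pE\, \iprod{a, x}^2 - \bigl(\pE\, \iprod{a, x}\bigr)^2 \mper
\]
Writing $\mu \defeq \pE\, \iprod{a, x} \in \R$ and again using linearity to pull the constants $\mu$ and $\mu^2$ outside the pseudoexpectation,
\[
\pE\, \iprod{a,x}^2 - \mu^2 \;=\; \pE\bigl( \iprod{a, x}^2 - 2\mu\, \iprod{a, x} + \mu^2 \bigr) \;=\; \pE\bigl( \iprod{a,x} - \mu \bigr)^2 \mper
\]
Since $p(x) \defeq \iprod{a,x} - \mu$ is a polynomial of degree $1$, its square $p(x)^2$ has degree $2 \le t$, so the pseudodistribution axiom gives $\pE\, p(x)^2 \ge 0$. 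Hence $a^\top \wt{\Cov}\, a \ge 0$, and since $a$ was arbitrary, $\wt{\Cov} \succeq 0$.

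There is essentially no obstacle here; the only point worth flagging is the degree bookkeeping, namely that the argument requires $t \ge 2$ so that degree-$2$ squares lie in the range where the pseudodistribution is guaranteed non-negative. This holds automatically in our setting, since every pseudoexpectation we use arises from (at least) the basic degree-$2$ SDP relaxation of the quadratic optimization problem. Equivalently, one could observe that $\wt{\Cov}$ is the Schur complement of the top-left $1 \times 1$ block (which equals $1$, as $\operatorname{diag}(X) = \mathbf{1}$) of the PSD moment matrix $X \in \R^{(n+1)\times(n+1)}$, and that Schur complements of PSD matrices are PSD; the square-expansion argument above is just an unpacking of this observation.
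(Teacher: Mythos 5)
Your proof is correct and is exactly the argument the paper has in mind: the paper introduces this fact by noting that non-negativity of $\pE$ on squared polynomials implies it, which is precisely your expansion of $a^\top \wt{\Cov}\, a$ as $\pE\bigl(\iprod{a,x} - \pE\iprod{a,x}\bigr)^2 \ge 0$ for arbitrary $a$. Your degree bookkeeping ($t \ge 2$) and the alternative Schur-complement remark are both fine but add nothing beyond the paper's intended one-line justification.
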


\begin{definition}[Constrained pseudodistributions]
\label{def:constrained-pseudo-distributions}
Let $\calA = \Set{ p_1\geq 0 , p_2\geq0 , \dots, p_r\geq 0}$ be a system of $r$ polynomial inequality constraints of degree at most $d$ in $m$ variables.
Let $\mu$ be a degree-$\ell$ pseudodistribution over $\mathbb{R}^m$.
We say that $\mu$ \emph{satisfies} $\calA$ at degree $\ell \ge1$ if for every subset $\calS \subset [r]$ and every sum-of-squares polynomial $q$ such that $\deg{q} + \sum_{i \in \calS } \max\Paren{ \deg{p_i}, d} \leq \ell$, $\pE_{\mu}{ q \prod_{i \in \calS} p_i } \geq 0$.
Further, we say that $\mu$ \emph{approximately satisfies} $\calA$ if the above inequalities are satisfied up to additive error $\pE_{\mu}{ q \prod_{i \in \calS} p_i } \geq -2^{-n^{\ell} } \norm{q} \prod_{i \in \calS} \norm{p_i}$, where $\norm{\cdot}$ denotes the Euclidean norm of the coefficients of the polynomial, represented in the monomial basis.  
\end{definition}

Note that when we have a pseudodistribution of degree $2$ then there is an easy weak separation oracle for the convex set of moment tensors of constrained pseudodistributions via checking that: (1) the moment matrix is PSD, and (2) $\pE p_i \geq 0$ for all $p_i \in \calA$\footnote{There is also a weak separation oracle for pseudoexpectation of higher degree, see~\cite{shor1987approach, parrilo2000structured}.}. Furthermore, if $ \calA$ is \emph{explicitly bounded}, in that it contains a constraint of the form $\{ \|x\|^2 \leq 1\}$, the above observation in combination with \cite{grotschel1981ellipsoid} yields the following fact.

\begin{theorem}[Efficient optimization over pseudodistributions of degree $2$]
    \label{fact:eff-pseudo-distribution-deg-2}
    There exists an $(m+r)^{O(1)} $-time algorithm that, given any explicitly bounded and satisfiable system $ \calA$ of $r$ polynomial constraints in $m$ variables, outputs a degree-$2$ pseudodistribution that satisfies $ \calA$ approximately, in the sense of~\cref{def:constrained-pseudo-distributions}.\footnote{
    Here, we assume that the bit complexity of the constraints in $ \calA$ is $(m+t)^{O(1)}$.
}
\end{theorem}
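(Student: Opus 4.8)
The plan is to recognize the set of degree-$2$ pseudodistributions satisfying $\calA$ as an explicit convex body $K$, equip it with a polynomial-time weak separation oracle, use explicit boundedness to place $K$ inside a Euclidean ball of constant radius, and then invoke the ellipsoid method of \cite{grotschel1981ellipsoid}; the point it returns will be within exponentially small Euclidean distance of $K$, and by Lipschitzness of the functionals defining $K$ this is precisely a degree-$2$ pseudodistribution that \emph{approximately} satisfies $\calA$ in the sense of \cref{def:constrained-pseudo-distributions}.

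First I would set up the correspondence between degree-$2$ pseudodistributions and matrices. A degree-$2$ pseudoexpectation $\pE$ over $\R^m$ is determined by $\pE 1$, the $\pE x_i$, and the $\pE x_i x_j$, which we collect into a symmetric moment matrix $M \in \R^{(m+1)\times(m+1)}$ with rows and columns indexed by $\{1, x_1, \dots, x_m\}$. Since every square of degree at most $2$ is the square of an affine polynomial, the pseudodistribution axioms are equivalent to $M \succeq 0$ together with the normalization $M_{1,1} = 1$; and, as already observed in the paragraph preceding the theorem, satisfying $\calA$ at degree $2$ amounts to the additional \emph{linear} inequalities $\langle C_i, M\rangle = \pE p_i \ge 0$, one for each $p_i \in \calA$ of degree at most $2$, where $C_i$ is a coefficient matrix of $p_i$ (the higher-degree constraints impose nothing at degree $2$). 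Let $K$ be the resulting convex set of matrices; it is nonempty because $\calA$ is satisfiable. The weak separation oracle for $K$ is then immediate: given a symmetric candidate $M$, compute its least eigenvalue and a unit eigenvector $v$; if $v^\top M v < 0$, return the halfspace $\{N : v^\top N v \ge 0\}$; otherwise test the finitely many affine constraints $M_{1,1} = 1$ and $\langle C_i, M\rangle \ge 0$, returning any violated one. This runs in $\poly(m, r)$ time with $\poly(m, r)$-bit numbers, hence is a polynomial-time weak separation oracle in the sense of \cite{grotschel1981ellipsoid}. Boundedness is where we use that $\calA$ is explicitly bounded: because $\|x\|^2 \le 1$ is a constraint, every $M \in K$ satisfies $\sum_i M_{i,i} = \pE\|x\|^2 \le 1$, and combined with $M_{1,1} = 1$ and $M \succeq 0$ this gives $\|M\|_F \le \Tr M \le 2$, so $K$ is a nonempty convex body contained in the radius-$2$ ball.

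Feeding this oracle and the outer-radius bound into the ellipsoid method of \cite{grotschel1981ellipsoid}, for any target accuracy $\delta > 0$ we obtain in time $\poly(m, r, \log(1/\delta))$ a matrix $M$ at Euclidean distance at most $\delta$ from $K$, hence a matrix violating each defining inequality of $K$ by only $\poly(L)\cdot\delta$, where $L$ is the bit-length of $\calA$. Every functional appearing in \cref{def:constrained-pseudo-distributions} at degree $2$ — namely $M \mapsto \langle Q, M\rangle$ for an sos $q = z^\top Q z$ of degree $\le 2$ (here $z = (1, x_1, \dots, x_m)$ and $Q \succeq 0$), and $M \mapsto \langle C_i, M\rangle$ — is Lipschitz with constant controlled by $\|q\|$ respectively $\|p_i\|$, and the bit-complexity hypothesis lower bounds these coefficient norms by $2^{-\poly(m)}$. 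Choosing $\delta = 2^{-\poly(m)}$ small enough that the resulting violations sit below the $2^{-n^\ell}\|q\|\prod_{i}\|p_i\|$ tolerance, the output $M$ is the moment matrix of a degree-$2$ pseudodistribution approximately satisfying $\calA$; since $\log(1/\delta) = \poly(m)$, the total running time is $(m+r)^{O(1)}$.

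The only real subtlety — and the reason the conclusion is merely approximate satisfaction — is that $K$ is flat: its normalization is an equality, and $\calA$ may force further equalities, so no separation-oracle-based method can be expected to return an exactly feasible point. This is precisely the weak-feasibility regime \cite{grotschel1981ellipsoid} is designed for, and the slack it introduces is exactly the exponentially small error \cref{def:constrained-pseudo-distributions} already permits; the remaining work is the routine bookkeeping of matching the Euclidean-distance guarantee of the ellipsoid method to the polynomial-coefficient-norm error in the definition, which is handled by the Lipschitz estimates above.
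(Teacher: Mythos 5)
Your proposal is correct and follows essentially the same route as the paper, which itself only sketches this standard fact: the paper's "proof" is precisely the observation (in the paragraph preceding the theorem) that checking PSDness of the moment matrix plus the linear constraints $\pE p_i \ge 0$ gives a weak separation oracle, that explicit boundedness bounds the body, and then a citation to \cite{grotschel1981ellipsoid}. Your writeup just fills in the same steps (moment-matrix encoding, eigenvector-based separation, trace bound from $\|x\|^2 \le 1$, ellipsoid with exponentially small accuracy matching the approximate-satisfaction tolerance), including correctly flagging the flatness/weak-feasibility subtlety that forces the conclusion to be approximate.
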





\section{Algorithm and Analysis}
\label{sec:quadratic-opt}

\subsection{Improving the runtime dependence on $\eps$}

In this section, we shall describe our primary algorithm, with a running time depending on the threshold rank of the label-extended graph.

\begin{theorem}
	\label{th:main-body}
	Let $M \in \R^{nq \times nq}$ be a symmetric matrix such that $\opnorm{M} \le 1$, and the $n$ diagonal $q \times q$ blocks of $M$ are all identically $0$. Also, let $D \in \R^{n\times n}$ be an arbitrary diagonal matrix with non-negative entries and $E = D \otimes \Id_q$. Let $k = \rank_{\ge \eps}(M)$. Consider the optimization problem
	\[ \OPT \defeq \max_{y \in \calC_{q}^n} \Phi(y) = \max_{y \in \calC_{q}^n} y^\top E^{1/2} M E^{1/2} y \mcom \]
	where $\calC_{q}^n$ is the subset of $y \in \{0,1\}^{nq}$ such that each of the $n$ blocks of $y$ (each of size $q$) has exactly one $1$.
	There exists an algorithm running in time $\poly(n) \cdot \left( \frac{1}{\eps} \right)^{O(k)}$ that returns a random $y \in \calC_{q}^{n}$ with
	\[ \E \Phi(y) \ge \OPT - O\left(\eps \Tr D \right) \mper \]
\end{theorem}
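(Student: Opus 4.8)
The plan is to carry out the algorithm and analysis sketched in the introduction for Boolean quadratic optimization, now in the label-extended setting, with the trace $\Tr D$ playing the role of the ambient dimension $n$. First I would reformulate the objective: writing $z = E^{1/2} y$ and $\wt M = E^{1/2} M E^{1/2}$, we have $\Phi(y) = z^\top M z$, and for any $y \in \calC_{q}^n$ each block of $y$ is a standard basis vector, so $\|z\|^2 = y^\top E y = \Tr D$. Let $\Pi$ be the orthogonal projector onto the span of the eigenvectors of $M$ with eigenvalue at least $\eps$, so $\rank(\Pi) = k$, and write $M = M_{\ge \eps} + M_{< \eps}$ with $M_{\ge \eps} = \Pi M \Pi$ (so $\opnorm{M_{\ge \eps}} \le 1$) and $M_{< \eps} = (\Id - \Pi) M (\Id - \Pi) \preceq \eps \Id$.

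The algorithm I would run is: compute the at most $k$ eigenvectors of $M$ with eigenvalue $\ge \eps$, giving an orthonormal basis of $\mathrm{Im}(\Pi)$; enumerate a $\sqrt{\eps \Tr D}$-net $\calN$ of the Euclidean ball of radius $\sqrt{\Tr D}$ inside $\mathrm{Im}(\Pi)$; for each $v \in \calN$, use \cref{fact:eff-pseudo-distribution-deg-2} to (approximately) maximize $\pE \Phi(y)$ over the degree-$2$ constraint system $\calA_v$ consisting of $\{y_{i,a}^2 = y_{i,a}\}_{i,a}$, $\{\sum_{a} y_{i,a} = 1\}_i$, and the single constraint $\|\Pi E^{1/2} y - v\|^2 \le \eps \Tr D$ --- this system is explicitly bounded since the first two families already force $\|y\|^2 = n$ (rescale $y$ by $1/\sqrt n$) --- and then round the resulting pseudoexpectation $\pE$ to $\wh y \in \calC_q^n$ by independently sampling, in each block $i$, a coordinate $a$ with probability $\pE y_{i,a}$; finally, output the $\wh y$ of largest $\Phi$-value over all $v$ for which $\calA_v$ is feasible. (Constraints $\{y_{i,a}^2 = y_{i,a}\}$ together with PSD-ness of the moment matrix give $\pE y_{i,a} \in [0,1]$ and hence this is a valid rounding.)

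For correctness there are two points. (i) \emph{The SDP is a relaxation for a good $v$}: letting $y^\star$ attain $\OPT$ and $v^\star \in \calN$ be within $\sqrt{\eps \Tr D}$ of $\Pi E^{1/2} y^\star$ (such $v^\star$ exists since $\|\Pi E^{1/2} y^\star\| \le \|E^{1/2} y^\star\| = \sqrt{\Tr D}$), the Dirac pseudoexpectation at $y^\star$ satisfies $\calA_{v^\star}$ with objective value $\OPT$, so the SDP value for $v^\star$ is at least $\OPT$. (ii) \emph{The rounding error is $O(\eps \Tr D)$ for every feasible $v$}: since the diagonal $q \times q$ blocks of $M$, hence of $\wt M$, vanish and distinct blocks are rounded independently, $\E \Phi(\wh y) = (\pE y)^\top \wt M (\pE y) = (\pE z)^\top M (\pE z)$, so the loss relative to the SDP value equals $\iprod{M, \wt{\Cov}_z}$ where $\wt{\Cov}_z = E^{1/2}\bigl(\pE y y^\top - (\pE y)(\pE y)^\top\bigr)E^{1/2} \succeq 0$. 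Splitting $\iprod{M,\wt{\Cov}_z} = \iprod{M_{<\eps},\wt{\Cov}_z} + \iprod{M_{\ge\eps},\wt{\Cov}_z}$: the first term is at most $\eps\Tr(\wt{\Cov}_z) \le \eps\,\pE\|z\|^2 = \eps \Tr D$ using $M_{<\eps}\preceq\eps\Id$; the second is at most $\Tr(\Pi\wt{\Cov}_z\Pi) = \pE\|\Pi(z - \pE z)\|^2$ since $\opnorm{M_{\ge\eps}}\le 1$, and by the triangle inequality together with $\|\pE w\|^2 \le \pE\|w\|^2$ (valid for any vector $w$ of affine forms, as its pseudo-covariance is PSD) and feasibility of $v$, $\pE\|\Pi(z - \pE z)\|^2 \le 2\,\pE\|\Pi E^{1/2} y - v\|^2 + 2\|v - \pE\Pi E^{1/2} y\|^2 \le 4\,\pE\|\Pi E^{1/2} y - v\|^2 \le 4\eps\Tr D$. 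Combining (i) and (ii), taking the best $\wh y$ over all feasible $v$ can only improve the expectation beyond the one at $v^\star$, so the output has $\E\Phi \ge \OPT - O(\eps\Tr D)$.

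The step I expect to need the most care is the net enumeration. A grid net of the circumscribing \emph{box} of the radius-$R$ ball in $\R^k$ at $\ell_2$-scale $\delta$ has size $\Theta(R\sqrt k/\delta)^{k}$, which would inject a spurious $k^{\Theta(k)}$ factor; instead one bounds, by a volume argument, the number of grid points of spacing $\Theta(\delta/\sqrt k)$ that actually lie inside the \emph{ball}, and there the $\sqrt k$ cancels, giving $|\calN| = (O(R/\delta))^{k} = (1/\eps)^{O(k)}$ for $R/\delta = \eps^{-1/2}$; one then checks this net is enumerable in comparable time, e.g.\ by a depth-first search over the grid that prunes partial vectors of norm exceeding $R$. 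Granting this, the total running time is $(1/\eps)^{O(k)} \cdot \poly(n)$: computing the eigenvectors and enumerating $\calN$ take $(1/\eps)^{O(k)}\poly(n)$, and $\calN$ contributes $(1/\eps)^{O(k)}$ SDP solves, each running in $\poly(n)$ time via \cref{fact:eff-pseudo-distribution-deg-2}. Everything else is a direct transcription of the Boolean argument, the only changes being the substitutions $n \mapsto \Tr D$ and $x \mapsto z = E^{1/2} y$ and the use of blockwise (rather than coordinatewise) independent rounding; the $2^{-\poly(n)}$ additive errors from approximate SDP solving are negligible provided the solver's precision is taken large enough relative to the bit complexity of the input.
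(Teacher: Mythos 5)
Your proposal is correct and follows essentially the same route as the paper: the same enumeration of a $\sqrt{\eps\Tr D}$-net of the radius-$\sqrt{\Tr D}$ ball in the top eigenspace, the same SDP with the extra constraint $\pE\|\Pi E^{1/2}y-v\|^2\le\eps\Tr D$, the same Dirac-feasibility argument showing some $v$ yields SDP value at least $\OPT$, and the same eigenspace split of the rounding error into $\eps\Tr D$ plus $4\,\pE\|\Pi E^{1/2}(y-v)\|^2$. The only substantive difference is that you work out the net-size and enumeration bound explicitly, which the paper dismisses as folklore.
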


Given this, our result on quadratic optimization over the hypercube follows near-immediately. We restate it for convenience.

\quadraticoptimization*

\begin{proof}
	We may assume without loss of generality that the diagonal entries of $A$ are all $0$, since they contribute a constant when optimizing over the hypercube. The result then immediately follows from \Cref{th:main-body} on setting $q = 2$, $D = \Id_{n}$, and $M = A \otimes \begin{pmatrix} 1 & -1 \\ -1 & 1 \end{pmatrix}$.
\end{proof}

The algorithm for \Cref{th:main-body} is described in \Cref{alg:fast-csp}.

\begin{algorithm}[ht]
	\DontPrintSemicolon
	\KwIn{$M \in \R^{nq \times nq}$ with $\rank_{>\eps}(M) = k$ and $\opnorm{M} \le 1$, and PSD diagonal $D \in \R^{n \times n}$}
	$\Pi \gets $ projection onto the top $k$-dimensional eigenspace of $M$\;
	$\calS \gets \sqrt{\eps \cdot \Tr D}$-net of $\sqrt{\Tr D} \cdot \bbB^{k}$.\;
    $E \gets D \otimes \Id_q$.\;
	\For{$v \in \calS$} {
		Find a degree-$2$ pseudoexpectation $\pE$ over $[q]^n$ optimizing $\pE y^\top E^{1/2} M E^{1/2} y$, subject to the constraints $y_i(1-y_i) = 0$, $\sum_{j \in [q]} y_{qi + j} = 1$, and $\pE \| \Pi E^{1/2} y - v \|_2^2 \le \eps \Tr D$.\;
        Round $\pE$ to a (random) $y_v \in [q]^n$ by independently sampling each index $(y_v)_i$ according to the marginals prescribed by $\pE$.\;
	}
	\Return{the $y_v$ that maximizes $\Phi(y_v)$}\;
	\caption{Solving CSPs faster on bounded threshold rank graphs}
	\label{alg:fast-csp}
\end{algorithm}


For the remainder of this section, suppose one has a pseudoexpectation $\pE$ over boolean variables $(y_{i,\alpha})_{i \in [n], \alpha \in [q]}$, satisfying the constraint that exactly one of the $(y_{i,\alpha})_{\alpha \in [q]}$ is $1$ for every $i$ (this enforces the requirement that $y \in \calC_{q}^n$).

Finally, consider the product distribution $\mu^{\otimes}$ over $[q]^n$ such that $\mu\left( \alpha_1 , \ldots , \alpha_n \right) = \prod_{j \in [n]} \pE y_{j,\alpha_j}$. We also have the associated distribution $\nu$ over $\calC_q^n$ induced by $\mu^{\otimes}$: in other words, the different blocks are independent, and within each block, the probabilities of the different symbols are given by the marginals under the pseudodistribution.

\begin{lemma}
	\label{lem:error-bound}
	Let $\Pi$ be the projector onto the eigenspace of $M$ with eigenvalues larger than $\eps$, and $v \in \R^{nq}$ an arbitrary vector. Then,
	\[ \pE \Phi(y) - \E_{\by \sim \nu} \Phi(\by) \le \eps \Tr D + 4 \pE \left\| \Pi E^{1/2} (y - v) \right\|_2^2 \mper \]
\end{lemma}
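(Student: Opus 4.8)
The plan is to follow the decomposition sketched in the techniques section: write $\pE\Phi(y) - \E_{\by\sim\nu}\Phi(\by)$ as $\langle E^{1/2}ME^{1/2}, \wt{\Cov}_E\rangle$ where $\wt{\Cov}_E$ is the covariance-like matrix $\pE (E^{1/2}y)(E^{1/2}y)^\top - (\pE E^{1/2}y)(\pE E^{1/2}y)^\top$, then split across the spectral decomposition of $M$ into the part with eigenvalues $\le\eps$ and the part with eigenvalues $>\eps$, and bound each piece. The first thing I would establish is that the rounded objective $\E_{\by\sim\nu}\Phi(\by)$ equals $(\pE E^{1/2}y)^\top M (\pE E^{1/2}y)$. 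This is where the hypothesis that the diagonal $q\times q$ blocks of $M$ vanish is used: under the product distribution $\nu$, for $i\neq j$ the pair $(\by_i,\by_j)$ is independent so $\E \by_{i,\alpha}\by_{j,\beta} = (\pE y_{i,\alpha})(\pE y_{j,\beta})$, while the diagonal-block entries of $M$ (which would see $\E \by_{i,\alpha}\by_{i,\beta}\neq (\pE y_{i,\alpha})(\pE y_{i,\beta})$ in general) are zero and contribute nothing. Hence only the off-block entries matter and for those $\E_{\by\sim\nu}$ factorizes, giving $\E_{\by\sim\nu}\Phi(\by) = \langle M, (\pE E^{1/2}y)(\pE E^{1/2}y)^\top\rangle$ exactly. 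Subtracting from $\pE\Phi(y) = \langle M, \pE(E^{1/2}y)(E^{1/2}y)^\top\rangle$ gives $\langle M, \wt{\Cov}_E\rangle$, where $\wt{\Cov}_E = E^{1/2}\wt{\Cov}E^{1/2} \succeq 0$ by the $\succeq 0$ fact for pseudocovariance matrices (conjugation by $E^{1/2}$ preserves PSD-ness).

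Next I would split $M = M_{\le\eps} + M_{>\eps}$ according to its eigendecomposition, with $M_{>\eps} = \Pi M \Pi$ and $\|M_{\le\eps}\|\le\eps$ (using $\|M\|\le1$, so the low part has eigenvalues in $[-1,\eps]$, hence operator norm at most... well, bounded above by $\eps$ in the sense that $M_{\le\eps}\preceq\eps\Id$; note the negative eigenvalues make $\langle M_{\le\eps},\wt{\Cov}_E\rangle$ only smaller, so $\langle M_{\le\eps},\wt{\Cov}_E\rangle \le \eps\Tr\wt{\Cov}_E$). For the low part: $\langle M_{\le\eps},\wt{\Cov}_E\rangle \le \eps\,\Tr(\wt{\Cov}_E) = \eps\,\Tr(E^{1/2}\wt{\Cov}E^{1/2}) = \eps\,\Tr(E\wt{\Cov})$. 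Since each diagonal entry of $\wt{\Cov}$ is $\pE y_{i,\alpha}^2 - (\pE y_{i,\alpha})^2 = \pE y_{i,\alpha} - (\pE y_{i,\alpha})^2 \le \pE y_{i,\alpha}$ (Booleanity: $y_{i,\alpha}^2 = y_{i,\alpha}$), and the constraint $\sum_\alpha y_{i,\alpha}=1$ forces $\sum_\alpha \pE y_{i,\alpha} = 1$ per block, summing against the block-constant weights $D_{ii}$ in $E = D\otimes\Id_q$ gives $\Tr(E\wt{\Cov}) \le \sum_i D_{ii}\sum_\alpha \pE y_{i,\alpha} = \Tr D$. So the low part contributes at most $\eps\Tr D$.

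For the high part, I would write $\langle M_{>\eps},\wt{\Cov}_E\rangle = \langle \Pi M\Pi, \wt{\Cov}_E\rangle = \langle M, \Pi\wt{\Cov}_E\Pi\rangle \le \|M\|\cdot\Tr(\Pi\wt{\Cov}_E\Pi) \le \Tr(\Pi\wt{\Cov}_E\Pi)$ using $\|M\|\le1$ and $\Pi\wt{\Cov}_E\Pi\succeq 0$. Now $\Tr(\Pi\wt{\Cov}_E\Pi) = \langle \Pi, E^{1/2}\wt{\Cov}E^{1/2}\rangle$ which, unwinding the definition of $\wt{\Cov}$, equals $\pE\|\Pi E^{1/2}(y - \pE y)\|_2^2$ — here I am using that $\pE$ of a square is nonnegative so that $\pE\|\Pi E^{1/2}y\|^2 - \|\Pi E^{1/2}\pE y\|^2 = \Tr(\Pi E^{1/2}\wt{\Cov}E^{1/2}\Pi)$ literally by expanding, valid since these are degree-$\le 2$ pseudomoments. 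Finally, to replace $\pE y$ by the arbitrary vector $v$, apply the triangle inequality inside the pseudoexpectation together with $\|a+b\|^2 \le 2\|a\|^2 + 2\|b\|^2$: $\pE\|\Pi E^{1/2}(y-\pE y)\|^2 \le 2\pE\|\Pi E^{1/2}(y-v)\|^2 + 2\|\Pi E^{1/2}(v - \pE y)\|^2$, and by Jensen/PSD-ness $\|\Pi E^{1/2}(\pE y - v)\|^2 = \|\pE \Pi E^{1/2}(y-v)\|^2 \le \pE\|\Pi E^{1/2}(y-v)\|^2$, giving the bound $4\pE\|\Pi E^{1/2}(y-v)\|^2$. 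Adding the two pieces yields the claimed inequality.

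I expect the only delicate point — the "main obstacle" — to be the very first step: verifying cleanly that $\E_{\by\sim\nu}\Phi(\by) = (\pE E^{1/2}y)^\top M(\pE E^{1/2}y)$ exactly, and in particular making precise how the vanishing diagonal $q\times q$ blocks of $M$ kill the within-block terms where the product distribution's second moments disagree with the pseudoexpectation's. Everything after that is the standard spectral splitting plus two applications of convexity, all of which only use degree-$2$ pseudoexpectation facts already recorded in the preliminaries.
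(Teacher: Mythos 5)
Your proposal is correct and follows essentially the same argument as the paper: the identity $\pE\Phi(y)-\E_{\by\sim\nu}\Phi(\by)=\langle E^{1/2}ME^{1/2},\wt{\Cov}\rangle$ (using the zero diagonal blocks and independence across blocks under $\nu$), the spectral split of $M$ at threshold $\eps$ with the low part bounded by $\eps\Tr(E\wt{\Cov})\le\eps\Tr D$ via Booleanity and the per-block sum constraint, and the high part bounded by $\Tr(\Pi E^{1/2}\wt{\Cov}E^{1/2}\Pi)=\pE\|\Pi E^{1/2}(y-\pE y)\|^2\le 4\pE\|\Pi E^{1/2}(y-v)\|^2$ via the almost-triangle inequality and Jensen for pseudoexpectations. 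Your only addition is making explicit the first step that the paper treats implicitly, which is fine.
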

\begin{proof}
	Because the diagonal blocks of $M$ are $0$, we are interested in
	\[ \pE y^\top E^{1/2} M E^{1/2} y - \left( \pE y \right)^\top E^{1/2} M E^{1/2} \left( \pE y \right) = \left\langle E^{1/2} M E^{1/2} , \wt{\Cov} \right\rangle \mcom \]
	where $\wt{\Cov} = \pE yy^\top - (\pE y)(\pE y)^\top$ is the ``pseudocovariance'' of $\pE$. We now decompose $A$ into its different eigenspaces and bound each separately. We have
	\begin{align*}
		\langle E^{1/2} M E^{1/2} , \wt{\Cov}\rangle &= \langle M_{<\eps} , E^{1/2} \wt{\Cov} E^{1/2} \rangle + \langle M_{\ge\eps} , E^{1/2} \wt{\Cov} E^{1/2}\rangle \\
			&\le \eps \Tr \left(E^{1/2} \wt{\Cov} E^{1/2}\right) + \langle M_{\ge \eps} , E^{1/2} \wt{\Cov} E^{1/2}\rangle \\
			&\le \eps \Tr \left(E^{1/2} \wt{\Cov} E^{1/2}\right) + \Tr\left( \Pi E^{1/2} \wt{\Cov} E^{1/2} \Pi \right).
	\end{align*}
    Note that the first term is
    \[ \eps \left\langle E , \wt{\Cov} \right\rangle = \eps \sum_{\substack{i \in [n] \\ \alpha \in [q]}} D_{ii} \pE y_{(i,\alpha)}^2 \leq \eps \sum_{i \in [n]} D_{ii} \sum_{\alpha \in q} \pE y_{(i,\alpha)} = \eps \Tr(D) \mcom \]
    utilizing that $\pE y_{qi+j}^2 \leq \pE y_{qi+j}$ and $\sum_{j \in [q]} \pE y_{qi+j} = 1$.
	Let $m = \pE y$. The second trace may be bounded as
	\begin{align*}
		\Tr\left( \Pi E^{1/2} \wt{\Cov} E^{1/2} \Pi \right) &= \pE \left\| \Pi E^{1/2} y - \Pi E^{1/2} m \right\|^2 \\
			&\le 2 \cdot \left( \pE \left\| \Pi E^{1/2} (y - v) \right\|_2^2 + \|\Pi E^{1/2} (m - v)\|^2 \right) \\
			&\le 4 \cdot \pE \left\| \Pi E^{1/2} (y - v) \right\|_2^2 \mcom
	\end{align*}
	where the last inequality is Jensen's, and the first is the almost-triangle inequality.
\end{proof}

We may now prove the result in this section with suboptimal running time.

\begin{proof}[{Proof of \Cref{th:main-body}}]
	We first observe that there exists some $v \in \calS$ such that the associated optimizing pseudoexpectation has objective value at least $\OPT$: indeed, consider an optimizer $y \in \calC_q^n$, and observe that $\|\Pi E^{1/2} y\|_2^2 \le \|D^{1/2} \mathbf{1}_n\|_2^2 \leq \Tr D$ since $y$ has exactly one non-zero entry in each $q$-length block, so there indeed exists a $v \in \calS$ such that $\| \Pi E^{1/2} y - v \|_2^2 \le \eps \cdot \Tr D$. Then, the Dirac distribution supported only on $y$ has objective value $\OPT$ and satisfies our constraints for this $v$.
    
	\Cref{lem:error-bound} implies that \Cref{alg:fast-csp} returns a point $\bx$ such that $\E \Phi(\bx) \ge \OPT - O(\eps \Tr D)$, as claimed. For the running time guarantee, we may use a standard SDP solver that runs in $\poly(n)$ time to find the pseudoexpectation of interest for each $v$. Thus, the running time is just $|\calS| \cdot \poly(n)$. The bound on the size of $\calS$, and thus the running time, is folklore.
\end{proof}

\subsection{Improving the runtime's dependence on $n$ for 2CSPs}
\label{subsec:fast-sdp}
In this section, we describe how to improve the $\poly(n,q) \cdot \left( \frac{1}{\eps} \right)^{\widetilde{O}(k)}$ runtime in the previous section to a near-linear time algorithm when $M$ corresponds to the normalized adjacency matrix of a label-extended graph. This yields a near-linear time algorithm for arbitrary 2CSPs when the threshold rank of the label-extended graph is small\footnote{In~\cref{sec:label-extended-threshold-rank} we will give runtime guarantees in terms of the threshold rank of the constraint graph by comparing these two quantities.}. Furthermore, with a small modification, we will show how to extend this to give a near-linear time algorithm for MAX-CUT on graphs where the normalized negative adjacency matrix has small threshold rank. 

\begin{theorem}[Fast Algorithm for 2CSPs with Small Label-Extended Threshold Rank]
\label{thm:near-linear-2csp-label-extended}
    Let $G$ be a graph on $[n]$ with $m$ edges, with adjacency matrix $A$ and degree diagonal $D$.
    For each edge $ij$ in $G$, let $\varphi_{ij} : [q]^2 \to \{0,1\}$ be a predicate on alphabet size $q$ that is not identically $0$ and let $M$ be the label-extended graph associated with the constraint graph $A$ and constraints $\varphi_{ij}$. Let $E = D \otimes \Id_q$.
    Also let $\OPT = \max_{x \in [q]^n} \Phi(x) \defeq \max_{x \in [q]^n} \sum_{ij \in E(G)} \varphi_{ij}(x_i,x_j)$.
    Set $k = \rank_{\ge \eps} (E^{-1/2} M E^{-1/2})$.
    Then,~\cref{alg:fast-csp} can be implemented in time $\wt{O} \left( n + m \right) \cdot \left(\frac 1 \eps\right)^{O(k)} \cdot \poly(q)$ and finds an assignment $\wh{x}$ with $\Phi(\wh{x}) \ge \OPT - O(\eps q^2 m)$.
\end{theorem}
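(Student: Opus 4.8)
The plan is to reduce to Theorem~\ref{th:main-body} plus Lemma~\ref{lem:error-bound}, and then to argue that every step of Algorithm~\ref{alg:fast-csp} — forming the projector $\Pi$, enumerating the net $\calS$, solving each constrained SDP, and rounding — can be carried out in near-linear time using the matrix multiplicative weights SDP solver of~\cite{fastsdp2csps10}. First I would set up the correspondence between the CSP value $\Phi(x) = \sum_{ij} \varphi_{ij}(x_i,x_j)$ and the quadratic form $y^\top E^{1/2} M E^{1/2} y$ over $\calC_q^n$. One has to be a little careful: the label-extended matrix $M$ is the \emph{adjacency} matrix of the label-extended graph, so $y^\top M y$ (for $y$ the indicator of an assignment) counts satisfied edges, and the normalization by $E = D\otimes \Id_q$ is what puts the spectrum in $[-1,1]$ so that $\rank_{\ge\eps}$ is the relevant parameter; I would spell out that optimizing $y^\top E^{1/2}(E^{-1/2}ME^{-1/2})E^{1/2}y = y^\top M y$ over $\calC_q^n$ is exactly $\OPT$, modulo the fact that the diagonal $q\times q$ blocks of the (normalized) $M$ vanish because $\varphi_{ij}$ is a constraint between distinct variables. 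The additive error $O(\eps\Tr D) = O(\eps m)$ from Theorem~\ref{th:main-body}, translated back through the normalization, becomes $O(\eps q^2 m)$ after accounting for the $\poly(q)$ slack introduced when relating the net parameter and the rounding to the unnormalized objective — I would track these $q$ factors explicitly.

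The substance of the theorem is the running time, so the bulk of the proof is implementing Algorithm~\ref{alg:fast-csp} fast. For the projector $\Pi$ onto the top-$k$ eigenspace of the normalized $M$: since $M$ (label-extended) is sparse with $O(mq^2)$ nonzeros and we only need a $k$-dimensional eigenspace, I would invoke a Lanczos/subspace-iteration routine to compute an (approximate) orthonormal basis of the top-$k$ eigenspace in $\wt O(m q^2) \cdot \poly(k)$ time; small spectral error here only perturbs $\Pi$ slightly and is absorbed into the additive error via a standard perturbation argument on Lemma~\ref{lem:error-bound}. The net $\calS$ has size $(1/\eps)^{O(k)}$ and is enumerated in that time. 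For each $v \in \calS$ we must (approximately) solve the SDP: maximize $\pE\, y^\top E^{1/2} M E^{1/2} y$ over degree-$2$ pseudoexpectations on $[q]^n$ subject to the Boolean/block constraints defining $\calC_q^n$ and the extra constraint $\pE\|\Pi E^{1/2} y - v\|_2^2 \le \eps\Tr D$. The base SDP (the canonical CSP relaxation) is exactly what~\cite{fastsdp2csps10} solves in $\wt O(n+m)\cdot\poly(q)$ time via matrix multiplicative weights; the only new ingredient is the single extra constraint. In the MMW framework the cost of adding a constraint is governed by its \emph{width}, i.e. the operator norm of the matrix expressing it. Here the constraint matrix is built from $\Pi E^{1/2}$; since $\|E^{-1/2}ME^{-1/2}\|\le 1$ and $\Pi$ projects onto eigenvalues $\ge\eps$, one gets $\|\Pi E^{1/2}\|^2 \le \eps^{-1}\cdot\|E\|$-type bounds, and after the normalization the effective width is $\poly(q/\eps)$. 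I would verify that $\Pi E^{1/2} y$ and hence the constraint gradient can be applied to a vector in $\wt O(n+m)\cdot\poly(q)$ time (here the low rank $k$ of $\Pi$ is essential — we never form $\Pi$ densely), so each MMW iteration stays near-linear, and the number of iterations is $\poly(q/\eps)\cdot\polylog$. Finally, rounding: sampling each block $(y_v)_i$ independently from the degree-$2$ marginals, and evaluating $\Phi(y_v)$, both take $O(n+mq^2)$ time.

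The main obstacle I expect is the width bound for the extra constraint inside the MMW solver — showing that enforcing $\pE\|\Pi E^{1/2}y - v\|_2^2 \le \eps\Tr D$ costs only $\poly(q/\eps)$ width rather than something scaling with $n$ or $\Tr D$. The naive matrix $(\Pi E^{1/2})^\top(\Pi E^{1/2})$ has operator norm as large as $\|E\|$, which could be $\Theta(n)$-ish in the worst degree profile, so a direct bound is too weak; the fix is to rescale/reparametrize the constraint (divide through by $\Tr D$, work with $E^{1/2}/\sqrt{\Tr D}$) and use that the relevant action is confined to the $k$-dimensional image of $\Pi$, where the spectrum of the normalized operator is $[\eps,1]$. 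I would also need to confirm that the MMW solver's \emph{approximate} feasibility — it returns a pseudodistribution satisfying the constraints up to small additive slack, as in~\cref{def:constrained-pseudo-distributions} — still lets Lemma~\ref{lem:error-bound} go through with only $O(\eps q^2 m)$ total loss, which amounts to checking that a slightly-violated version of $\pE\|\Pi E^{1/2}y-v\|^2 \le (1+o(1))\eps\Tr D$ feeds into the lemma's bound with the same asymptotics. Everything else (the net size, the reduction to Theorem~\ref{th:main-body}, the rounding) is routine bookkeeping of $q$-factors.
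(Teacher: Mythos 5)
Your proposal is correct and follows essentially the same route as the paper's proof: reduce the approximation guarantee to \cref{th:main-body}, implement the SDP with the matrix multiplicative weights solver of \cite{fastsdp2csps10}, show the extra constraint $\pE\|\Pi E^{1/2}y - v\|^2 \le \eps\Tr D$ has width $O(\poly(q/\eps))$ by comparing its operator norm ($O(\Tr E)$) to the allowed slack ($\Omega(\eps\Tr D)$), compute the top eigenspace only approximately via power/Lanczos iteration, exploit the rank-$k$ structure of $\Pi$ so feasibility checks and constraint applications stay near-linear, and round from the marginals $\pE y$. The paper additionally spells out how the violation check is done directly on the sketched solution $\widetilde{W}_t$ and uses a projection capturing all eigenvalues $\ge 2\eps$ of dimension at most $\rank_{\ge\eps}(M)$ in place of your perturbation argument, but these are implementation details of the same idea.
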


Note that the main bottleneck for runtime in~\cref{alg:fast-csp} is solving the SDP. The approximation guarantees of our algorithm will follow via the analysis in the proof of~\cref{th:main-body} and thus it suffices to show that we can find an (approximate) solution to the SDP in~\cref{alg:fast-csp} in near-linear time. We utilize existing techniques from~\cite{fastsdp2csps10}, which give a fast SDP solver that runs in time $\poly(k, q, 1/\eps) \cdot \widetilde{O}(n + m)$. Since our setup is very similar to theirs and our improved runtime largely follows from prior work, we will only sketch the main differences in the proof.

\begin{proof}[Proof of \cref{thm:near-linear-2csp-label-extended}]
    Note that the approximation guarantees of the theorem follow from~\cref{th:main-body}, provided that we can quickly find an approximate SDP solution where on average constraints are approximately satisfied\footnote{As noted in~\cite{fastsdp2csps10} if they are approximately satisfied on average, then by~\cite{10.1109/FOCS.2009.74} we can modify the solution to satisfy all constraints exactly at the expense of decreasing the objective value by $\eps q^2 m$.} and our additional constraint holds up to accuracy $\eps \Tr(D) / 100$. Below, we will briefly sketch the algorithm and analysis of the fast SDP solver.

    We will find an (approximate) solution to the relevant SDP in near-linear time via matrix multiplicative weights, as in~\cite{fastsdp2csps10}.
    Note that we can represent all the relevant moments of our pseudo expectation via the following matrix:
    \[ \pE \left( 1 \quad \frac{1}{\sqrt{qn}} y \right)^\top \left( 1 \quad \frac{1}{\sqrt{qn}} y \right) \,.\]
    Observe that the bottom right $qn \times qn$ submatrix is a PSD matrix with trace $1$ and corresponds directly to the SDP solution computed in prior work~\cite{fastsdp2csps10}. Note that via~\cite{fastsdp2csps10} we can find such a matrix which optimizes the objective function in near-linear time, without taking into account our additional constraint. The main remaining piece is to show that we can also enforce our new constraint in the matrix multiplicative weights solver. The two ingredients towards this will be to show that (a) the new constraint is low-width and (b) we can efficiently detect violations of the constraint only given access to a sketch of the proposed ``feasible'' solution.

    First, $\pE_{\mu} \norm{\Pi E^{1/2} x - v}^2 \leq \eps \Tr (D)$ is indeed a low-width constraint. The hyperplane corresponding to this constraint has operator norm $O(\Tr (E))$, while we need the constraint to be satisfied up to $\eps/100 \cdot \Tr (D)$ slack. Thus, the width of the constraint is $O(q/\eps)$.

    We now describe how to efficiently detect a violation of the constraint given a sketched version of the solution. Recall that every iteration we are given $\widetilde{W}_t \in \mathbb{R}^{\log (qn) \times (qn)}$ such that $\widetilde{W}^\top \widetilde{W}$ is the sketched SDP solution. Note that in order to detect the violation of the constraint in the sketched SDP, we only need to know $\Pi' (E')^{1/2} \widetilde{W}^\top \widetilde{W} (E')^{1/2} \Pi'$, where $\Pi'$ is the projection of the last $n$ coordinates to the relevant subspace and $E'$ is a $(n+1) \times (n+1)$ dimensional block matrix with $E$ in the lower right block, $\Id_1$ in the top left block, and zeros elsewhere\footnote{$\Pi'$ is not equal to $\Pi$ since our submatrix also includes an initial row and column for the expectation.} Note that this is a low rank matrix, since $\Pi'$ has rank $k+1$. Thus, if we just compute $B \Pi' \widetilde{W}_t$, where $B \in \mathbb{R}^{(k+1) \times n}$ simply is a change of basis, then by looking at $(B \Pi' \widetilde{W}_t)^\top (B \Pi' \widetilde{W}_t)$ we can determine if $\widetilde{W}^\top \widetilde{W}$ satisfies the constraint. The fact that it suffices to check whether the solution would be satisfied by the sketch follows from prior analysis.

    Furthermore, we can efficiently compute $(B \Pi' \widetilde{W}_t)^\top (B \Pi' \widetilde{W}_t)$. It suffices to be able to compute $B \Pi'$ in near-linear time, and this matrix can be constructed quickly given the relevant top eigenvectors of $M$. Furthermore, note that instead of having $\Pi$ be the projection to the eigenspaces with eigenvalues at least $\eps$, it also suffices to have $\Pi$ be the projection to some subspace which includes all eigenvectors with eigenvalues at least $2\eps$ and has dimension at most $\rank_{\geq \eps} (M)$.\footnote{We simply lose a factor of $2$ in the approximation guarantee, which still yields a $O(\eps q^2 m)$ additive approximation.} Thus, it suffices to find a collection of at most $\rank_{\geq \eps} (M)$ vectors such that every eigenvector with eigenvalue at least $2 \eps$ is in close to the span. We can compute this via the power method---note that between $2\eps$ and $\eps$ there must exist a gap between two eigenvectors, and thus it suffices to take $\log n$ iterations of a top-$k$ power method. Each iteration takes time $O(kq^2m)$ since $M$ has at most $q^2m$ non-zero entries.

    Finally, we will argue that we can implement our rounding algorithm when only given access to the sketched solution, without computing the full $qn \times qn$ solution matrix. In order to round the solution, it suffices to know $\pE y$, since this defines the marginal distribution on each coordinate, and we can then independently round each coordinate. However, this is a $1 \times qn$ submatrix of $\widetilde{W}^\top \widetilde{W}$ which can be reconstructed in $\widetilde{O}(qn)$ time from the Gram vectors.
\end{proof}

\maxcut*

Given the proof of~\cref{thm:near-linear-2csp-label-extended}, the proof of~\cref{th3:maxcut} follows via some small modifications. 

\begin{proof}[Proof of~\cref{th3:maxcut}]
    We aim to apply~\cref{thm:near-linear-2csp-label-extended}, but with a small twist. Note that we can write the objective value corresponding to the MAX-CUT in~\cref{th:main-body} as follows:
    \[ \frac{1}{2} \left[ y^\top M_{A}^1 y + y^\top \left( -A \otimes \begin{pmatrix} 1 & -1 \\ -1 & 1 \end{pmatrix} \right) y \right]\,,\]
    where $M_{A}^1$ is the label-extended graph of the CSP with constraint graph $A$ and every constraint being trivially satisfied. Since the first term has a fixed value for every $y \in \calC_2^n$, it is equivalent to optimizing the second term, or the quadratic optimization problem associated with $M = -D^{-1/2} A D^{-1/2} \otimes \begin{pmatrix} 1 & -1 \\ -1 & 1 \end{pmatrix}$ and $E = D \otimes \Id_2$. Thus, by~\cref{th:main-body} we have that~\cref{alg:fast-csp} incurs a small rounding error for $k = \rank_{\ge \eps} (M) = O\left(\rank_{\le -\eps} (A)\right)$ when utilizing the matrices (and associated projections to the top eigenspaces) described in the prior line. Furthermore, since the original formulation of the objective corresponds to a quadratic optimization associated with the label-extended graph of a CSP, we can apply~\cref{thm:near-linear-2csp-label-extended} to show that~\cref{alg:fast-csp} can be implemented in $\wt{O} \left( n + m \right) \cdot \left(\frac 1 \eps\right)^{O(k)} \cdot \poly(q)$ time since the relevant SDPs are equivalent (up the this $y^\top M_{A}^1 y$ fixed additive shift) to optimization on the label-extended graph of a CSP. 
\end{proof}

    
    


\section{Threshold Rank of the Signed and Label-Extended Adjacency Matrix}
\label{sec:label-extended-threshold-rank}

We prove the following theorem relating the threshold rank of real-valued signings of adjacency matrices to the threshold rank of the underlying adjacency matrix.
The proof is a small adaptation of an argument from \cite{BHSV25}, which is itself an adaptation of an argument from \cite{BRS11}.

\begin{theorem}
    \label{th:threshold-rank-bound}
   Suppose $A \in \R^{n \times n}_{\geq 0}$ is symmetric with $\|A\| \leq 1$.
   Let $B \in \R^{n \times n}$ be symmetric with $|B_{ij}| \leq A_{ij}$ for all $i,j$.
   If $\rank_{\geq \tau}(A) \leq s$, then for any $\sigma > 0$,
   \[
   \rank_{\geq \sqrt{\tau (1 - \sigma) + \sigma}}(B) \leq \frac s {\sigma^2} \, .
   \]
\end{theorem}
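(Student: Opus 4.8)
The plan is to fix $\lambda := \sqrt{\tau(1-\sigma)+\sigma}$ and bound $r := \rank_{\geq \lambda}(B)$ by $s/\sigma^2$. We may assume $\tau < 1$ (otherwise $\lambda = 1$ and the claim can be reached by a limiting argument) and $s \geq 1$: if $s = 0$, then every eigenvalue of $A$ is below $\tau$, so by Perron--Frobenius $\|A\|_{\mathrm{op}} = \lambda_{\max}(A) < \tau \leq \lambda$, and since $|B_{ij}|\le A_{ij}$ with $A$ entrywise nonnegative forces $\|B\|_{\mathrm{op}} \leq \|A\|_{\mathrm{op}}$, we get $r = 0$. Note also $\lambda^2 = \tau + \sigma(1-\tau) \geq \tau$, so $\sigma = \tfrac{\lambda^2 - \tau}{1-\tau} \geq 0$ and $\lambda \geq \tau$.

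\emph{Step 1: from many large eigenvalues of $B$ to a geometric folding statement.} Let $V$ be the span of the eigenvectors of $B$ with eigenvalue at least $\lambda$, so $\dim V = r$ and $x^\top B x \geq \lambda\|x\|^2$ for every $x \in V$. Since $|B_{ij}| \leq A_{ij}$ and $A \geq 0$ entrywise, for any $x$ we have the pointwise domination
\[
  x^\top B x \;=\; \sum_{i,j} x_i x_j B_{ij} \;\leq\; \sum_{i,j} |x_i|\,|x_j|\, A_{ij} \;=\; |x|^\top A\,|x| \mcom
\]
where $|x|$ is the entrywise absolute value. Because $\lambda \leq 1$, for $x \in V$ this gives $|x|^\top A\,|x| \geq \lambda\|x\|^2 \geq \lambda^2\|x\|^2$. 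Let $\Pi$ be the orthogonal projector onto the span of the eigenvectors of $A$ with eigenvalue at least $\tau$; then $\operatorname{rank}(\Pi) = \rank_{\geq \tau}(A) \leq s$, $\Pi$ commutes with $A$, and $A \preceq \|A\|_{\mathrm{op}}\Pi + \tau(I-\Pi) \preceq \Pi + \tau(I-\Pi)$, so
\[
  \lambda^2\|x\|^2 \;\leq\; |x|^\top A\,|x| \;\leq\; \bigl\|\Pi\,|x|\bigr\|^2 + \tau\Bigl(\|x\|^2 - \bigl\|\Pi\,|x|\bigr\|^2\Bigr) \mper
\]
Rearranging (using $\tau < 1$) yields $\bigl\|\Pi\,|x|\bigr\|^2 \geq \tfrac{\lambda^2-\tau}{1-\tau}\,\|x\|^2 = \sigma\,\|x\|^2$ for every $x \in V$. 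In words: the nonlinear folding $x \mapsto |x|$ sends the $r$-dimensional space $V$ into the set of vectors retaining at least a $\sigma$-fraction of their squared norm inside the $s$-dimensional subspace $\operatorname{range}(\Pi)$.

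\emph{Step 2: a counting argument forcing $r \leq s/\sigma^2$.} This is the step I would lift from \cite{BRS11}, in the form given by \cite{BHSV25}. Fix any orthonormal basis $v_1,\dots,v_r$ of $V$ and set $S := \sum_{\ell=1}^r |v_\ell|\,|v_\ell|^\top \succeq 0$, so $\operatorname{Tr}(S) = \sum_\ell \|v_\ell\|^2 = r$. Summing the pointwise bound over the basis gives $\sigma r \leq \sum_\ell \bigl\|\Pi\,|v_\ell|\bigr\|^2 = \operatorname{Tr}(\Pi S)$, and since $\Pi$ is a projector of rank at most $s$,
\[
  \operatorname{Tr}(\Pi S) \;=\; \operatorname{Tr}(\Pi S \Pi) \;\leq\; \|\Pi\|_F \cdot \|\Pi S\Pi\|_F \;\leq\; \sqrt{s}\cdot\sqrt{\operatorname{Tr}(S^2)}
\]
by Cauchy--Schwarz for the trace inner product. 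Thus $\sigma^2 r^2 \leq s\cdot\operatorname{Tr}(S^2)$, and it remains to show $\operatorname{Tr}(S^2) \lesssim r$ for a suitably chosen orthonormal basis of $V$, which then gives $r \lesssim s/\sigma^2$.

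\emph{Expected main obstacle.} The bottleneck is exactly the bound on $\operatorname{Tr}(S^2) = \sum_{\ell,m}\langle |v_\ell|, |v_m|\rangle^2$: for a generic orthonormal basis of $V$ the folded vectors $|v_\ell|$ can be badly aligned (they can even all coincide, e.g.\ when $V$ is spanned by characters of $\{\pm1\}^{\log n}$), making $\operatorname{Tr}(S^2)$ as large as $r^2$ and the crude Cauchy--Schwarz bound vacuous. Overcoming this should require either a careful choice of orthonormal basis of $V$ that keeps the folded vectors nearly decoupled, or an argument replacing the Cauchy--Schwarz step by one that exploits more of the eigenstructure of $B$ and $A$ than just $\dim V$ and $\operatorname{rank}(\Pi)$; the loss of a square -- $\sigma^{-2}$ rather than the $\sigma^{-1}$ one would naively hope for -- is the quantitative signature of the nonlinearity of $x \mapsto |x|$, and is where I expect the real content of \cite{BRS11,BHSV25} to lie. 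With Theorem~\ref{th:threshold-rank-bound} in hand, the bound on $\rank_{\geq 2q\eps}(M)$ (Corollary~\ref{cor:label-extended-threshold-rank-bound}) follows by applying it with $A$ a rank-preserving tensor lift of the normalized adjacency matrix of $G$ and $B$ the normalized label-extended matrix $M$, which is entrywise dominated by $A$ up to a factor governed by the alphabet size $q$.
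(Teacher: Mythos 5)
You have a genuine gap, and it sits exactly where you say it does. Your Step 1 is correct: entrywise domination gives $x^\top B x \le |x|^\top A |x|$, and splitting $A$ across the projector $\Pi$ onto its $\ge\tau$ eigenspace yields $\|\Pi |x|\|^2 \ge \sigma \|x\|^2$ for every $x$ in the top eigenspace of $B$. But this is the easy half. Your Step 2 then needs $\Tr(S^2)=\sum_{\ell,m}\iprod{|v_\ell|,|v_m|}^2$ to be of order $r$ for some orthonormal basis of $V$, and you give no argument for this; as you yourself observe, the folded vectors $|v_\ell|$ can be almost fully aligned, in which case $\Tr(S^2)$ is of order $r^2$ and the chain $\sigma^2 r^2 \le s\,\Tr(S^2)$ yields nothing. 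No choice of basis is exhibited, and it is not at all clear that one exists, so the proof is incomplete at precisely the step that carries the content of the theorem.

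The paper closes this by a different device: it never takes absolute values of vectors in the eigenspace. With $u_1,\dots,u_t$ orthonormal eigenvectors of $B$ with eigenvalue at least $\lambda$, form $U\in\R^{n\times t}$ with columns $u_s/\sqrt{t}$, let $w_i$ be the \emph{rows} of $U$ (the vertex embeddings), and set $v_i = w_i^{\otimes 2}/\|w_i\|$. The Gram matrix $W=V^\top V$ has entries $\iprod{w_i,w_j}^2/(\|w_i\|\|w_j\|)\ge 0$, so it can be tested against the entrywise-nonnegative $A$; Cauchy--Schwarz applied to $\sum_{i,j}B_{ij}\iprod{w_i,w_j}=\iprod{B,UU^\top}\ge\lambda$, using $|B_{ij}|\le A_{ij}$ and $\|A\|\le 1$, gives $\iprod{A,W}\ge\lambda^2$. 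Crucially, the quantity your argument cannot control comes for free here: $\|W\|_F^2 = \sum_{i,j}\iprod{w_i,w_j}^4/(\|w_i\|^2\|w_j\|^2) \le \sum_{i,j}\iprod{w_i,w_j}^2 = \|U^\top U\|_F^2 = 1/t$, because the columns of $U$ are orthogonal; also $\Tr(W)=1$. Your own Step-2 computation then finishes the proof with $W$ in place of $S/r$: splitting $A$ across $\Pi$ gives $\lambda^2 \le \tau\bigl(1-\Tr(\Pi W\Pi)\bigr)+\Tr(\Pi W\Pi)$, hence $\Tr(\Pi W)\ge\sigma$, while $\Tr(\Pi W)\le\sqrt{s}\,\|W\|_F\le\sqrt{s/t}$, so $t\le s/\sigma^2$. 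In short, the tensor-square of the row embeddings (inherited from \cite{BRS11,BHSV25}) simultaneously supplies the nonnegativity you were seeking via $x\mapsto|x|$ and the Frobenius-norm (decoupling) bound your proposal is missing; the quadratic loss in $\sigma$ comes from this Cauchy--Schwarz step, not from any folding argument.
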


\begin{corollary}
    \label{cor:label-extended-threshold-rank-bound}
    Let $A \in \R^{n \times n}_{\ge 0}$ with $\|A\| \le 1$.
    For some $q \in \N$, let $B \in \C^{nq \times nq}$ be such that for any $i,j \in [n]$ and $\alpha,\beta \in [q]$, $B_{(i,\alpha),(j,\beta)} = 0$ if $A_{ij} = 0$, and $|B_{(i,\alpha),(j,\beta)}| \le A_{ij}$. 
    If $\rank_{\geq \tau}(A) \leq s$, then
    \[ \rank_{\geq q\sqrt{\tau(1-\sigma) + \sigma}}(B) \leq \frac{s}{\sigma^2} \mper \]
    In particular, for any $\eps > 0$, setting $\sigma = \tau = \eps^2$,
    \[ \rank_{\geq 2q\eps}(B) \le \frac{\rank_{\ge \eps^2}(A)}{\eps^4} \mper \]
\end{corollary}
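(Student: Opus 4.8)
The plan is to deduce the corollary from \Cref{th:threshold-rank-bound} by a single ``tensoring-up'' step that absorbs the alphabet $[q]$ into the dominating matrix, followed by a rescaling that accounts for the extra factor of $q$ appearing in the threshold. Throughout I will treat $B$ as Hermitian (the only case in which $\rank_{\ge t}(B)$ is meaningful; in particular the real label-extended adjacency matrix is covered). The first thing I would check is that \Cref{th:threshold-rank-bound}, although stated for real symmetric $B$, goes through verbatim for Hermitian $B$: its proof should use only that (i) $\Tr(B^{2\ell}) = \sum_i \lambda_i(B)^{2\ell} \ge t^{2\ell}\cdot\rank_{\ge t}(B)$, and (ii) entrywise domination $|B_{ij}| \le C_{ij}$ with $C \ge 0$ entrywise implies $\Tr(B^{2\ell}) \le \Tr(C^{2\ell})$ — which follows from the triangle inequality on the cycle expansion $\Tr(B^{2\ell}) = \sum B_{i_1 i_2}\cdots B_{i_{2\ell}i_1}$ together with the fact that $\Tr(B^{2\ell})$ is real. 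Both (i) and (ii) hold for Hermitian $B$. (If one insists on using \Cref{th:threshold-rank-bound} strictly as a black box, one can instead pass to the real form $\bigl(\begin{smallmatrix}\operatorname{Re}B & -\operatorname{Im}B\\ \operatorname{Im}B & \operatorname{Re}B\end{smallmatrix}\bigr)$, whose spectrum equals that of $B$ with multiplicities doubled and whose entries are still dominated by the corresponding entries of $A$; this costs only harmless constant factors and does not affect the ``in particular'' statement.)

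Next I would introduce $\bar A := A \otimes J_q \in \R^{nq\times nq}$, where $J_q$ is the all-ones $q\times q$ matrix, so that $\bar A_{(i,\alpha),(j,\beta)} = A_{ij}$ for all $\alpha,\beta$. The hypothesis $|B_{(i,\alpha),(j,\beta)}| \le A_{ij}$ then says exactly that $\tfrac1q B$ is entrywise dominated by $\tfrac1q \bar A$. Since $J_q$ has spectrum $\{q, 0,\dots,0\}$, the matrix $\tfrac1q\bar A = A \otimes (\tfrac1q J_q)$ has nonzero eigenvalues exactly $\lambda_1(A),\dots,\lambda_n(A)$ (with the remaining $n(q-1)$ eigenvalues equal to $0$). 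Hence $\|\tfrac1q\bar A\| = \|A\| \le 1$ and, for any $\tau>0$, $\rank_{\ge\tau}(\tfrac1q\bar A) = \rank_{\ge\tau}(A) \le s$.

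Now apply \Cref{th:threshold-rank-bound} (in its Hermitian form) with dominating matrix $\tfrac1q\bar A$ and dominated matrix $\tfrac1q B$: for every $\sigma>0$,
\[ \rank_{\ge\sqrt{\tau(1-\sigma)+\sigma}}\bigl(\tfrac1q B\bigr) \le \frac{s}{\sigma^2}. \]
Because $\lambda$ is an eigenvalue of $B$ iff $\lambda/q$ is an eigenvalue of $\tfrac1q B$, we have $\rank_{\ge\mu}(\tfrac1q B) = \rank_{\ge q\mu}(B)$ for all $\mu$, which rewrites the display as $\rank_{\ge q\sqrt{\tau(1-\sigma)+\sigma}}(B) \le s/\sigma^2$ — the first claim. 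For the ``in particular'' statement, set $\sigma = \tau = \eps^2$: then $\sqrt{\tau(1-\sigma)+\sigma} = \eps\sqrt{2-\eps^2} \le 2\eps$ for $0<\eps<1$, so monotonicity of threshold rank in the threshold gives $\rank_{\ge 2q\eps}(B) \le \rank_{\ge q\eps\sqrt{2-\eps^2}}(B) \le s/\eps^4 = \rank_{\ge\eps^2}(A)/\eps^4$.

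The only genuinely non-routine point is the first one — confirming that \Cref{th:threshold-rank-bound} is insensitive to working with Hermitian rather than real symmetric matrices — and, as noted, even that can be sidestepped at the cost of immaterial constants; the remainder is Kronecker-product bookkeeping plus one rescaling.
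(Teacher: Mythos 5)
Your proposal is correct and takes essentially the same route as the paper: the paper's proof is exactly the one-line instantiation of \Cref{th:threshold-rank-bound} with the dominating matrix $\tfrac{1}{q} A \otimes \mathbf{1}_q \mathbf{1}_q^\top$ and the dominated matrix $\tfrac{1}{q} B$, followed by the same eigenvalue rescaling and the choice $\sigma=\tau=\eps^2$. The only difference is how the complex-entry issue is handled: the paper extends \Cref{th:threshold-rank-bound} to complex $B$ by adapting the key lemma of \cite{BHSV25} (a Cauchy--Schwarz argument on eigenvectors, not the trace-power argument you sketch), whereas your parenthetical real-form fallback would actually force a rescaling of the dominating matrix (its operator norm doubles) and hence slightly worsen the threshold constant, so the direct Hermitian extension is the right choice.
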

\begin{proof}
    This immediately follows on instantiating \Cref{th:threshold-rank-bound} with $\frac{1}{q} A \otimes \bone_q \bone_q^\top$ and $\frac{1}{q} B$.
\end{proof}

Given this final ingredient, we may prove our result \Cref{th1:arbitrary-csps} on near-linear time algorithms for general 2CSPs on bounded threshold rank graphs, restated for convenience.

\arbitrarycsps*

\begin{proof}
    Let $B \in \R^{nq \times nq}$ be the normalized adjacency matrix of the label-extended graph: For $i,j \in [n]$ and $\alpha,\beta \in [q]$, $B_{(i,\alpha),(j,\beta)} = (D^{-1/2} A D^{-1/2})_{ij}$ if $\{i,j\}$ is an edge and $\varphi_{ij}(\alpha,\beta) = 1$, and is $0$ otherwise. Now instantiate \Cref{thm:near-linear-2csp-label-extended} with $M = \frac{1}{q} B$, the scaled degree diagonal $E = qD \otimes \Id_q$, and the error parameter as $2\eps$. To conclude, the $2\eps$-threshold rank of $\frac{1}{q} B$ can be bounded in terms of the $\eps^2$-threshold rank of $A$ using \Cref{cor:label-extended-threshold-rank-bound}.
\end{proof}

\cite{BHSV25} proves \Cref{th:threshold-rank-bound} in the case that $B$ is $-A$.
We show that with minor changes the same proof allows $B$ to have complex entries.
The only change needed to the argument which proves \cite[Corollary 4.2]{BHSV25}, is the following lemma which substitutes for \cite[Lemma 4.4]{BHSV25}.

\begin{lemma}[{Adapted from \cite[Lemma 4.4]{BHSV25}}]
    Suppose $A \in \R^{n \times n}_{\geq 0}$ is symmetric with $\|A\| \leq 1$.
    Let $B \in \R^{n \times n}$ be symmetric with $|B_{ij}| \leq A_{ij}$ for all $i,j$: in particular, $B_{ij} = 0$ if $A_{ij} = 0$.
    Suppose that $\rank_{\geq \lambda}(B) \geq t$ for some $\lambda \geq 0$ and $t$ a positive integer.
    Then there exists $V \in \R^{t^2 \times n}$ such that
    \[
    \iprod{A, V^\top V} \geq \lambda^2 \, , \quad \|V^\top V\|_F^2 \leq \frac 1 t \, , \quad \Tr(V^\top V) = 1 \, .
    \]
\end{lemma}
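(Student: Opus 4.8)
The plan is to build $V$ from the Hadamard (entrywise) square of the top eigenspace of $B$, after a crucial per-vertex reweighting. Fix $t$ orthonormal eigenvectors $u_1,\dots,u_t$ of $B$ with eigenvalues $\lambda_1,\dots,\lambda_t \ge \lambda$, let $U \in \R^{n\times t}$ have these as columns, and let $w_x \in \R^t$ be the $x$-th row of $U$, so $(w_x)_k = (u_k)_x$. Then $\sum_{x} w_x w_x^\top = U^\top U = \Id_t$, so $\sum_x \|w_x\|_2^2 = t$, and $\langle w_x, w_y\rangle = P_{xy}$ where $P := U U^\top$ is the orthogonal projector onto $\mathrm{span}(u_1,\dots,u_t)$, of rank $t$. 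Writing $\rho_x := \|w_x\|_2$ and $\hat w_x := w_x / \rho_x$ (with $\hat w_x := 0$ if $\rho_x = 0$), I would take the $x$-th column of $V \in \R^{t^2 \times n}$ to be $v_x := \tfrac1{\sqrt t}\, \rho_x\,(\hat w_x \otimes \hat w_x) \in \R^{t^2}$. Since $\langle a \otimes a, b \otimes b\rangle = \langle a, b\rangle^2$, this gives $(V^\top V)_{xy} = \tfrac1t \rho_x \rho_y \langle \hat w_x, \hat w_y\rangle^2$, equivalently $P_{xy}^2/(t\rho_x\rho_y)$ (read as $0$ when $\rho_x\rho_y = 0$). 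The reweighting by $\rho_x$ is the one nonobvious point: the naive choice $v_x \propto w_x \otimes w_x$ violates the Frobenius bound when $P$ is spread out.

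With this $V$ the trace and Frobenius conditions follow by routine manipulation. Indeed $\Tr(V^\top V) = \sum_x \|v_x\|_2^2 = \tfrac1t \sum_x \rho_x^2 \|\hat w_x\|_2^4 = \tfrac1t \sum_x \rho_x^2 = 1$, and, using $\langle \hat w_x, \hat w_y\rangle^4 \le \langle \hat w_x, \hat w_y\rangle^2$ since $|\langle \hat w_x,\hat w_y\rangle|\le 1$,
\[ \|V^\top V\|_F^2 = \tfrac1{t^2}\sum_{x,y} \rho_x^2 \rho_y^2 \langle \hat w_x, \hat w_y\rangle^4 \;\le\; \tfrac1{t^2}\sum_{x,y} \rho_x^2 \rho_y^2 \langle \hat w_x, \hat w_y\rangle^2 \;=\; \tfrac1{t^2}\sum_{x,y} \langle w_x, w_y\rangle^2 \;=\; \tfrac1{t^2}\Tr(P^2) \;=\; \tfrac1t , \]
the last equality because $\Tr(P^2) = \Tr(P) = t$ for a rank-$t$ orthogonal projector.

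The substantive step, and the one I expect to be the main obstacle, is the correlation bound $\iprod{A, V^\top V} \ge \lambda^2$, i.e., $\sum_{x,y} A_{xy}\,\rho_x\rho_y\,\langle \hat w_x, \hat w_y\rangle^2 \ge t\lambda^2$. I would prove this from three ingredients. First, the eigenvalue hypothesis gives $\sum_{x,y} B_{xy}\langle w_x, w_y\rangle = \sum_k u_k^\top B u_k = \sum_k \lambda_k \ge t\lambda$; since $|B_{xy}| \le A_{xy}$ and $\langle w_x, w_y\rangle = \rho_x\rho_y\langle\hat w_x,\hat w_y\rangle$, this yields $\sum_{x,y} A_{xy}\rho_x\rho_y\,|\langle\hat w_x,\hat w_y\rangle| \ge t\lambda$. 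Second, the total weight is controlled: with $\rho := (\rho_x)_x \in \R^n_{\ge 0}$ one has $\sum_{x,y} A_{xy}\rho_x\rho_y = \rho^\top A\rho \le \|A\|\,\|\rho\|_2^2 \le t$, using $\|A\| \le 1$ and $\|\rho\|_2^2 = \sum_x \rho_x^2 = t$. Third, Cauchy--Schwarz with the nonnegative weights $A_{xy}\rho_x\rho_y$ gives
\[ \Big(\sum_{x,y} A_{xy}\rho_x\rho_y\,|\langle\hat w_x,\hat w_y\rangle|\Big)^2 \;\le\; \Big(\sum_{x,y} A_{xy}\rho_x\rho_y\Big)\Big(\sum_{x,y} A_{xy}\rho_x\rho_y\,\langle\hat w_x,\hat w_y\rangle^2\Big) . \]
Chaining these, $\sum_{x,y} A_{xy}\rho_x\rho_y\langle\hat w_x,\hat w_y\rangle^2 \ge (t\lambda)^2 / t = t\lambda^2$, hence $\iprod{A, V^\top V} = \tfrac1t \sum_{x,y} A_{xy}\rho_x\rho_y\langle\hat w_x,\hat w_y\rangle^2 \ge \lambda^2$ (the case $\lambda = 0$ being immediate, since the sum has only nonnegative terms). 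This finishes the proof; the extension to Hermitian $B \in \C^{n\times n}$ asserted in the surrounding text goes through verbatim with $w_x \otimes w_x$ replaced by $w_x \otimes \overline{w_x}$ and the squares replaced by squared moduli.
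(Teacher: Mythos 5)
Your proof is correct and takes essentially the same route as the paper's: the same construction $v_x = w_x^{\otimes 2}/\lVert w_x\rVert$ (differing only in where the $1/\sqrt{t}$ normalization is placed) and the same Cauchy--Schwarz argument with weights $A_{xy}\lVert w_x\rVert\lVert w_y\rVert$, using $\lVert A\rVert \le 1$ to bound $\sum_{x,y} A_{xy}\lVert w_x\rVert\lVert w_y\rVert$. The only differences are presentational: you verify the trace and Frobenius conditions directly rather than deferring to \cite[Lemma 4.4]{BHSV25}, and your form of Cauchy--Schwarz avoids dividing by $A_{xy}$.
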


\begin{proof}[Proof, adapted from \cite{BHSV25}]
  Let $\{u_1,\ldots,u_t\}$ be $t$ orthonormal eigenvectors of $B$ whose corresponding eigenvalues are at least $\lambda$.
  Let $U \in \R^{n \times t}$ have $s$th column equal to $u_s / \sqrt{t}$.
  We record for later that
  \[
  \begin{gathered}
      \iprod{B, UU^\top} = \sum_{s \leq t} \frac 1 t \iprod{B, u_s u_s^\top} \geq \lambda \mcom \\
      \|UU^\top\|_F^2 = \|U^\top U\|_F^2 = \frac 1 t \mcom \\
      \Tr(UU^\top) = \Tr(U^\top U) = 1 \mper
  \end{gathered}
  \]
  We denote the $i$-th row of $U$ by $w_i$, and let $v_i = w_i^{\otimes 2} / \|w_i\|$ if $w_i \neq 0$ and $v_i = 0$ otherwise.
  Let $V$ be the matrix whose $i$-th column is $v_i$.
  We will show that $V$ satisfies the conclusions of the lemma.

  For the first conclusion, note that
  \begin{align*}
  \Paren{ \sum_{i,j} B_{ij} \iprod{w_i, w_j}}^2 & = \Paren{ \sum_{i,j} \frac{\sqrt{A_{ij}} \cdot \iprod{w_i, w_j}}{\sqrt{\|w_i\| \|w_j\|}} \cdot \frac{B_{ij}}{\sqrt{A_{ij}}} \sqrt{\|w_i\|\|w_j\|}}^2 \\
  & \leq \Paren{\sum_{ij} A_{ij} \frac{\iprod{w_i, w_j}^2}{\|w_i\|\|w_j\|}} \cdot \Paren{\sum_{i,j} \frac{|B_{ij}|^2}{A_{ij}} \|w_i\|\|w_j\|} \\
  & \leq \Paren{\sum_{ij} A_{ij} \frac{\iprod{w_i, w_j}^2}{\|w_i\|\|w_j\|}} \cdot \Paren{\sum_{i,j} A_{ij} \|w_i\|\|w_j\|} \\
  & \leq \Paren{\sum_{ij} A_{ij} \frac{\iprod{w_i, w_j}^2}{\|w_i\|\|w_j\|}} \Tr(UU^T) 
  \end{align*}
  where for the last line we used that $\|A\| \leq 1$.
  This gives the conclusion that $\iprod{A,VV^\top} \geq \lambda^2$.
  The remaining conclusions follow exactly the arguments in \cite{BHSV25}, Lemma 4.4.
\end{proof}

\clearpage

\addcontentsline{toc}{section}{References}
\bibliographystyle{alpha}
\bibliography{refs}

@misc{hsieh2025coloring3colorablegraphslow,
      title={{Coloring 3-Colorable Graphs with Low Threshold Rank}}, 
      author={Jun-Ting Hsieh},
      year={2025},
      eprint={2508.03093},
      archivePrefix={arXiv},
      primaryClass={cs.DS},
      url={https://arxiv.org/abs/2508.03093}, 
}

@inproceedings{10.1109/FOCS.2009.74,
author = {Raghavendra, Prasad and Steurer, David},
title = {{How to Round Any CSP}},
year = {2009},
isbn = {9780769538501},
publisher = {IEEE Computer Society},
address = {USA},
url = {https://doi.org/10.1109/FOCS.2009.74},
doi = {10.1109/FOCS.2009.74},
abstract = {A large number of interesting combinatorial optimization problems like MAX CUT, MAX k-SAT, and UNIQUE GAMES fall under the class of constraint satisfaction problems (CSPs). Recent work by one of the authors (STOC 2008) identifies a semidefinite programming (SDP) relaxation that yields the optimal approximation ratio for every CSP, under the Unique Games Conjecture (UGC). Very recently (FOCS 2009), the authors also showed unconditionally that the integrality gap of this basic SDP relaxation cannot be reduced by adding large classes of valid inequalities (e.g., in the fashion of Sherali--Adams LP hierarchies). In this work, we present an efficient rounding scheme that achieves the integrality gap of this basic SDP relaxation for every CSP (and it also achieves the gap of much stronger SDP relaxations). The SDP relaxation we consider is stronger or equivalent to any relaxation used in literature to approximate CSPs. Thus, irrespective of the truth of the UGC, our work yields an efficient generic algorithm that for every CSP, achieves an approximation at least as good as the best known algorithm in literature. The rounding algorithm in this paper can be summarized succinctly as follows: Reduce the dimension of SDP solution by random projection, discretize the projected vectors, and solve the resulting CSP instance by brute force! Even the proof is simple in that it avoids the use of the machinery from unique games reductions such as dictatorship tests, Fourier analysis or the invariance principle. A common theme of this paper and the subsequent paper in the same conference is a robustness lemma for SDP relaxations which asserts that approximately feasible solutions can be made feasible by "smoothing'' without changing the objective value significantly.},
booktitle = {Proceedings of the 2009 50th Annual IEEE Symposium on Foundations of Computer Science},
pages = {586–594},
numpages = {9},
keywords = {approximation algorithm, constraint satisfaction problems, dimension reduction, integrality gap, rounding scheme, semidefinite programming, sensitivity analysis},
series = {FOCS '09}
}

@article{shor1987approach,
  title={An approach to obtaining global extremums in polynomial mathematical programming problems},
  author={Shor, Naum Zuselevich},
  journal={Cybernetics},
  volume={23},
  number={5},
  pages={695--700},
  year={1987},
  publisher={Springer}
}

@phdthesis{parrilo2000structured,
	Author = {Parrilo, Pablo A},
	School = {California Institute of Technology},
	Title = {Structured semidefinite programs and semialgebraic geometry methods in robustness and optimization},
	Year = {2000}}

@article{grotschel1981ellipsoid,
  title={The ellipsoid method and its consequences in combinatorial optimization},
  author={Gr{\"o}tschel, Martin and Lov{\'a}sz, L{\'a}szl{\'o} and Schrijver, Alexander},
  journal={Combinatorica},
  volume={1},
  pages={169--197},
  year={1981},
  publisher={Springer}
}

@article{doi:10.1137/070709529,
author = {Alon, Noga and Coja-Oghlan, Amin and H\`{a}n, Hi\^{e}p and Kang, Mihyun and R\"{o}dl, Vojt\v{e}ch and Schacht, Mathias},
title = {Quasi-Randomness and Algorithmic Regularity for Graphs with General Degree Distributions},
journal = {SIAM Journal on Computing},
volume = {39},
number = {6},
pages = {2336-2362},
year = {2010},
doi = {10.1137/070709529},

URL = { 
    
        https://doi.org/10.1137/070709529
    
    

},
eprint = { 
    
        https://doi.org/10.1137/070709529
    
    

}
,
    abstract = { We deal with two intimately related subjects: quasi-randomness and regular partitions. The purpose of the concept of quasi-randomness is to express how much a given graph “resembles” a random one. Moreover, a regular partition approximates a given graph by a bounded number of quasi-random graphs. Regarding quasi-randomness, we present a new spectral characterization of low discrepancy, which extends to sparse graphs. Concerning regular partitions, we introduce a concept of regularity that takes into account vertex weights, and show that if \$G=(V,E)\$ satisfies a certain boundedness condition, then G admits a regular partition. In addition, building on the work of Alon and Naor [Proceedings of the 36th ACM Symposium on Theory of Computing (STOC), Chicago, IL, ACM, New York, 2004, pp. 72–80], we provide an algorithm that computes a regular partition of a given (possibly sparse) graph G in polynomial time. As an application, we present a polynomial time approximation scheme for MAX CUT on (sparse) graphs without “dense spots.” }
}

@article{doi:10.1137/080730160,
author = {Coja-Oghlan, Amin and Cooper, Colin and Frieze, Alan},
title = {An Efficient Sparse Regularity Concept},
journal = {SIAM Journal on Discrete Mathematics},
volume = {23},
number = {4},
pages = {2000-2034},
year = {2010},
doi = {10.1137/080730160},

URL = { 
    
        https://doi.org/10.1137/080730160
    
    

},
eprint = { 
    
        https://doi.org/10.1137/080730160
    
    

}
,
    abstract = { Let \${\bf A}\$ be a \$0/1\$ matrix of size \$m\times n\$, and let p be the density of \${\bf A}\$ (i.e., the number of ones divided by \$m\cdot n\$). We show that \${\bf A}\$ can be approximated in the cut norm within \$\varepsilon\cdot mnp\$ by a sum of cut matrices (of rank 1), where the number of summands is independent of the size \$m\cdot n\$ of \${\bf A}\$, provided that \${\bf A}\$ satisfies a certain boundedness condition. This decomposition can be computed in polynomial time. This result extends the work of Frieze and Kannan [Combinatorica, 19 (1999), pp. 175–220] to sparse matrices. As an application, we obtain efficient \$1-\varepsilon\$ approximation algorithms for “bounded” instances of MAX CSP problems. }
}

@inproceedings{10.1145/3406325.3451126,
author = {Jeronimo, Fernando Granha and Srivastava, Shashank and Tulsiani, Madhur},
title = {{Near-linear time decoding of Ta-Shma’s codes via splittable regularity}},
year = {2021},
isbn = {9781450380539},
publisher = {Association for Computing Machinery},
address = {New York, NY, USA},
url = {https://doi.org/10.1145/3406325.3451126},
doi = {10.1145/3406325.3451126},
abstract = {The Gilbert–Varshamov bound non-constructively establishes the existence of binary codes of distance 1/2−є/2 and rate Ω(є2). In a breakthrough result, Ta-Shma [STOC 2017] constructed the first explicit family of nearly optimal binary codes with distance 1/2−є/2 and rate Ω(є2+α), where α → 0 as є → 0. Moreover, the codes in Ta-Shma’s construction are є-balanced, where the distance between distinct codewords is not only bounded from below by 1/2−є/2, but also from above by 1/2+є/2. Polynomial time decoding algorithms for (a slight modification of) Ta-Shma’s codes appeared in [FOCS 2020], and were based on the Sum-of-Squares (SoS) semidefinite programming hierarchy. The running times for these algorithms were of the form NOα(1) for unique decoding, and NOє,α(1) for the setting of “gentle list decoding”, with large exponents of N even when α is a fixed constant. We derive new algorithms for both these tasks, running in time \~{O}є(N). Our algorithms also apply to the general setting of decoding direct-sum codes. Our algorithms follow from new structural and algorithmic results for collections of k-tuples (ordered hypergraphs) possessing a “structured expansion” property, which we call splittability. This property was previously identified and used in the analysis of SoS-based decoding and constraint satisfaction algorithms, and is also known to be satisfied by Ta-Shma’s code construction. We obtain a new weak regularity decomposition for (possibly sparse) splittable collections W ⊆ [n]k, similar to the regularity decomposition for dense structures by Frieze and Kannan [FOCS 1996]. These decompositions are also computable in near-linear time \~{O}(|W |), and form a key component of our algorithmic results.},
booktitle = {Proceedings of the 53rd Annual ACM SIGACT Symposium on Theory of Computing},
pages = {1527–1536},
numpages = {10},
keywords = {Decoding algorithm, GV bound, Near-linear time, Regularity decomposition},
location = {Virtual, Italy},
series = {STOC 2021}
}

@inproceedings{10.1145/509907.509945,
author = {Alon, Noga and de la Vega, W. Fernandez and Kannan, Ravi and Karpinski, Marek},
title = {{Random sampling and approximation of MAX-CSP problems}},
year = {2002},
isbn = {1581134959},
publisher = {Association for Computing Machinery},
address = {New York, NY, USA},
url = {https://doi.org/10.1145/509907.509945},
doi = {10.1145/509907.509945},
abstract = {We present a new efficient sampling method for approximating r-dimensional Maximum Constraint Satisfaction Problems, MAX-rCSP, on n variables up to an additive error εnr. We prove a newgeneral paradigm in that it suffices, for a given set of constraints, to pick a small uniformly random subset of its variables, and the optimum value of the subsystem induced on these variables gives (after a direct normalization and with high probability) an approximation to the optimum of the whole system up to an additive error of εnr. Our method gives for the first time a polynomial in ε—1 bound on the sample size necessary to carry out the above approximation. Moreover, this bound is independent in the exponent on the dimension r. The above method gives a completely uniform sampling technique for all the MAX-rCSP problems, and improves the best known sample bounds for the low dimensional problems, like MAX-CUT. The method of solution depends on a new result on t he cut norm of random subarrays, and a new sampling technique for high dimensional linear programs. This method could be also of independent interest.},
booktitle = {Proceedings of the Thiry-Fourth Annual ACM Symposium on Theory of Computing},
pages = {232–239},
numpages = {8},
location = {Montreal, Quebec, Canada},
series = {STOC '02}
}

@article{fk1999,
author = {Frieze, Alan and Kannan, Ravindran},
year = {1999},
month = {02},
pages = {175-220},
title = {Quick Approximation to Matrices and Applications},
volume = {19},
journal = {Combinatorica},
doi = {10.1007/s004930050052}
}

@inproceedings{10.1109/CCC.2010.20,
author = {Kolla, Alexandra},
title = {Spectral Algorithms for Unique Games},
year = {2010},
isbn = {9780769540603},
publisher = {IEEE Computer Society},
address = {USA},
url = {https://doi.org/10.1109/CCC.2010.20},
doi = {10.1109/CCC.2010.20},
abstract = {We present a new algorithm for Unique Games which is based on purely {em spectral} techniques, in contrast to previous work in the area, which relies heavily on semidefinite programming (SDP). Given a highly satisfiable instance of Unique Games, our algorithm is able to recover a good assignment. The approximation guarantee depends only on the completeness of the game, and not on the alphabet size, while the running time depends on spectral properties of the {em Label-Extended} graph associated with the instance of Unique Games. In particular, we show how our techniques imply a quasi-polynomial time algorithm that decides satisfiability of a game on the Khot-Vishnoicite{KV} integrality gap instance. Notably, when run on that instance, the standard SDP relaxation of Unique Games {em fails}. As a special case, we also show how to re-derive a polynomial time algorithm for Unique Games on expander constraint graphs (similar to cite{AKKTSV}) and a sub-exponential time algorithm for Unique Games on the Hypercube.},
booktitle = {Proceedings of the 2010 IEEE 25th Annual Conference on Computational Complexity},
pages = {122–130},
numpages = {9},
keywords = {Unique Games, approximation algorithms, spectral techniques},
series = {CCC '10}
}

@article{10.1145/2665063,
author = {Lee, James R. and Gharan, Shayan Oveis and Trevisan, Luca},
title = {Multiway Spectral Partitioning and Higher-Order Cheeger Inequalities},
year = {2014},
issue_date = {November 2014},
publisher = {Association for Computing Machinery},
address = {New York, NY, USA},
volume = {61},
number = {6},
issn = {0004-5411},
url = {https://doi.org/10.1145/2665063},
doi = {10.1145/2665063},
abstract = {A basic fact in spectral graph theory is that the number of connected components in an undirected graph is equal to the multiplicity of the eigenvalue zero in the Laplacian matrix of the graph. In particular, the graph is disconnected if and only if there are at least two eigenvalues equal to zero. Cheeger's inequality and its variants provide an approximate version of the latter fact; they state that a graph has a sparse cut if and only if there are at least two eigenvalues that are close to zero.It has been conjectured that an analogous characterization holds for higher multiplicities: There are k eigenvalues close to zero if and only if the vertex set can be partitioned into k subsets, each defining a sparse cut. We resolve this conjecture positively. Our result provides a theoretical justification for clustering algorithms that use the bottom k eigenvectors to embed the vertices into Rk, and then apply geometric considerations to the embedding.We also show that these techniques yield a nearly optimal quantitative connection between the expansion of sets of size ≈ n/k and λk, the kth smallest eigenvalue of the normalized Laplacian, where n is the number of vertices. In particular, we show that in every graph there are at least k/2 disjoint sets (one of which will have size at most 2n/k), each having expansion at most O(√λk log k). Louis, Raghavendra, Tetali, and Vempala have independently proved a slightly weaker version of this last result. The √log k bound is tight, up to constant factors, for the “noisy hypercube” graphs.},
journal = {J. ACM},
month = dec,
articleno = {37},
numpages = {30},
keywords = {Cheeger's inequality, sparsest cut, spectral algorithms, spectral clustering}
}

@article{doi:10.1137/090773714,
author = {Trevisan, Luca},
title = {Max Cut and the Smallest Eigenvalue},
journal = {SIAM Journal on Computing},
volume = {41},
number = {6},
pages = {1769-1786},
year = {2012},
doi = {10.1137/090773714},

URL = { 
    
        https://doi.org/10.1137/090773714
    
    

},
eprint = { 
    
        https://doi.org/10.1137/090773714
    
    

}
,
    abstract = { We describe a new approximation algorithm for Max Cut. Our algorithm runs in \$\tilde O(n^2)\$ time, where \$n\$ is the number of vertices, and achieves an approximation ratio of \$.531\$. In instances in which an optimal solution cuts a \$1-\varepsilon\$ fraction of edges, our algorithm finds a solution that cuts a \$1-4\sqrt{\varepsilon} + 8\varepsilon-o(1)\$ fraction of edges. Our main result is a variant of spectral partitioning, which can be implemented in nearly linear time. Given a graph in which the Max Cut optimum is a \$1-\varepsilon\$ fraction of edges, our spectral partitioning algorithm finds a set \$S\$ of vertices and a bipartition \$L,R=S-L\$ of \$S\$ such that at least a \$1-O(\sqrt \varepsilon)\$ fraction of the edges incident on \$S\$ have one endpoint in \$L\$ and one endpoint in \$R\$. (This can be seen as an analogue of Cheeger's inequality for the smallest eigenvalue of the adjacency matrix of a graph.) Iterating this procedure yields the approximation results stated above. A different, more complicated, variant of spectral partitioning leads to a polynomial time algorithm that cuts a \$1/2 + e^{-\Omega(1/\varepsilon)}\$ fraction of edges in graphs in which the optimum is \$1/2 + \varepsilon\$. }
}

@article{LIU2015306,
	abstract = {We introduce a set of multi-way dual Cheeger constants and prove universal higher-order dual Cheeger inequalities for eigenvalues of normalized Laplace operators on weighted finite graphs. Our proof proposes a new spectral clustering phenomenon deduced from metrics on real projective spaces. We further extend those results to a general reversible Markov operator and find applications in characterizing its essential spectrum.},
	author = {Shiping Liu},
	doi = {https://doi.org/10.1016/j.aim.2014.09.023},
	issn = {0001-8708},
	journal = {Advances in Mathematics},
	keywords = {Cheeger constants, Higher-order dual Cheeger inequalities, Spectral clustering, Markov operators, Essential spectrum},
	pages = {306-338},
	title = {Multi-way dual Cheeger constants and spectral bounds of graphs},
	url = {https://www.sciencedirect.com/science/article/pii/S0001870814003351},
	volume = {268},
	year = {2015},
	bdsk-url-1 = {https://www.sciencedirect.com/science/article/pii/S0001870814003351},
	bdsk-url-2 = {https://doi.org/10.1016/j.aim.2014.09.023}}

@inproceedings{10.1145/3313276.3316299,
author = {Jain, Vishesh and Koehler, Frederic and Risteski, Andrej},
title = {Mean-field approximation, convex hierarchies, and the optimality of correlation rounding: a unified perspective},
year = {2019},
isbn = {9781450367059},
publisher = {Association for Computing Machinery},
address = {New York, NY, USA},
url = {https://doi.org/10.1145/3313276.3316299},
doi = {10.1145/3313276.3316299},
abstract = {The free energy is a key quantity of interest in Ising models, but unfortunately, computing it in general is computationally intractable. Two popular (variational) approximation schemes for estimating the free energy of general Ising models (in particular, even in regimes where correlation decay does not hold) are: (i) the mean-field approximation with roots in statistical physics, which estimates the free energy from below, and (ii) hierarchies of convex relaxations with roots in theoretical computer science, which estimate the free energy from above. We show, surprisingly, that the tight regime for both methods to compute the free energy to leading order is identical. More precisely, we show that the mean-field approximation to the free energy is within O((n||J||F)2/3) of the true free energy, where ||J||F denotes the Frobenius norm of the interaction matrix of the Ising model. This simultaneously subsumes both the breakthrough work of Basak and Mukherjee, who showed the tight result that the mean-field approximation is within o(n) whenever ||J||F = o(√n), as well as the work of Jain, Koehler, and Mossel, who gave the previously best known non-asymptotic bound of O((n||J||F)2/3log1/3(n||J||F)). We give a simple, algorithmic proof of this result using a convex relaxation proposed by Risteski based on the Sherali-Adams hierarchy, automatically giving sub-exponential time approximation schemes for the free energy in this entire regime. Our algorithmic result is tight under Gap-ETH. We furthermore combine our techniques with spin glass theory to prove (in a strong sense) the optimality of correlation rounding, refuting a recent conjecture of Allen, O’Donnell, and Zhou. Finally, we give the tight generalization of all of these results to k-MRFs, capturing as a special case previous work on approximating MAX-k-CSP.},
booktitle = {Proceedings of the 51st Annual ACM SIGACT Symposium on Theory of Computing},
pages = {1226–1236},
numpages = {11},
keywords = {mean-field approximation, graphical models, correlation rounding, convex hierarchies, Variational methods},
location = {Phoenix, AZ, USA},
series = {STOC 2019}
}

@InProceedings{pmlr-v49-risteski16,
  title = 	 {How to calculate partition functions using convex programming hierarchies: provable bounds for variational methods},
  author = 	 {Risteski, Andrej},
  booktitle = 	 {29th Annual Conference on Learning Theory},
  pages = 	 {1402--1416},
  year = 	 {2016},
  editor = 	 {Feldman, Vitaly and Rakhlin, Alexander and Shamir, Ohad},
  volume = 	 {49},
  series = 	 {Proceedings of Machine Learning Research},
  address = 	 {Columbia University, New York, New York, USA},
  month = 	 {23--26 Jun},
  publisher =    {PMLR},
  pdf = 	 {http://proceedings.mlr.press/v49/risteski16.pdf},
  url = 	 {https://proceedings.mlr.press/v49/risteski16.html},
  abstract = 	 {We consider the problem of approximating partition functions for Ising models. We make use of recent tools in combinatorial optimization: the Sherali-Adams and Lasserre convex programming hierarchies, in combination with variational methods to get algorithms for calculating partition functions in these families. These techniques give new, non-trivial approximation guarantees for the partition function beyond the regime of correlation decay. They also generalize some classical results from statistical physics about the Curie-Weiss ferromagnetic Ising model, as well as provide a partition function counterpart of classical results about max-cut on dense graphs (Arora, 1995). With this, we connect techniques from two apparently disparate research areas – optimization and counting/partition function approximations. (i.e. #-P type of problems). Furthermore, we design to the best of our knowledge the first provable, convex variational methods. Though in the literature there are a host of convex versions of variational methods, they come with no guarantees (apart from some extremely special cases, like e.g. the graph has a single cycle). We consider dense and low rank graphs, and interestingly, the reason our approach works on these types of graphs is because local correlations propagate to global correlations – completely the opposite of algorithms based on correlation decay. In the process we design novel entropy approximations based on the low-order moments of a distribution. Our proof techniques are very simple and generic, and likely to be applicable to many other settings other than Ising models. }
}

@inproceedings{fastsdp2csps10,
  title={Fast SDP algorithms for constraint satisfaction problems},
  author={Steurer, David},
  booktitle={Proceedings of the twenty-first annual ACM-SIAM symposium on Discrete Algorithms},
  pages={684--697},
  year={2010},
  organization={SIAM}
}

@inproceedings{ajt19,
author = {Alev, Vedat and Jeronimo, Fernando and Tulsiani, Madhur},
year = {2019},
month = {11},
pages = {180-201},
title = {Approximating Constraint Satisfaction Problems on High-Dimensional Expanders},
doi = {10.1109/FOCS.2019.00021}
}

@inproceedings{uniquedecodingtashma,
author = {Jeronimo, Fernando and Quintana, Dylan and Srivastava, Shashank and Tulsiani, Madhur},
year = {2020},
month = {11},
pages = {434-445},
title = {{Unique Decoding of Explicit -balanced Codes Near the Gilbert-Varshamov Bound}},
doi = {10.1109/FOCS46700.2020.00048}
}

@article{BHSV25,
  title={{Finding Colorings in One-Sided Expanders}},
  author={Buhai, Rares-Darius and Hua, Yiding and Steurer, David and {V{\'a}ri-Kakas}, Andor},
  journal={arXiv preprint arXiv:2508.02825},
  year={2025}
}

@inproceedings{OGT13,
  title={A new regularity lemma and faster approximation algorithms for low threshold rank graphs},
  author={Oveis Gharan, Shayan and Trevisan, Luca},
  booktitle={International Workshop on Approximation Algorithms for Combinatorial Optimization},
  pages={303--316},
  year={2013},
  organization={Springer}
}

@article{FLP15,
  title={{Sub-exponential Approximation Schemes for CSPs: from Dense to Almost Sparse}},
  author={Fotakis, Dimitris and Lampis, Michael and Paschos, Vangelis Th},
  journal={arXiv preprint arXiv:1507.04391},
  year={2015}
}

@inproceedings{MS08,
  title={{Yet Another Algorithm for Dense Max Cut: Go Greedy}},
  author={Mathieu, Claire and Schudy, Warren},
  booktitle={SODA},
  volume={8},
  pages={176--182},
  year={2008}
}

@inproceedings{BRS11,
  title={{Rounding Semidefinite Programming Hierarchies via Global Correlation}},
  author={Barak, Boaz and Raghavendra, Prasad and Steurer, David},
  booktitle={2011 IEEE 52nd Annual Symposium on Foundations of Computer Science},
  pages={472--481},
  year={2011},
  organization={IEEE}
}

@inproceedings{GS11,
  title={{Lasserre Hierarchy, Higher Eigenvalues, and Approximation Schemes for Graph Partitioning and Quadratic Integer Programming with PSD Objectives}},
  author={Guruswami, Venkatesan and Sinop, Ali Kemal},
  booktitle={2011 IEEE 52nd Annual Symposium on Foundations of Computer Science},
  pages={482--491},
  year={2011},
  organization={IEEE}
}

@inproceedings{GS12,
  title={{Faster SDP hierarchy solvers for local rounding algorithms}},
  author={Guruswami, Venkatesan and Sinop, Ali Kemal},
  booktitle={2012 IEEE 53rd Annual Symposium on Foundations of Computer Science},
  pages={197--206},
  year={2012},
  organization={IEEE}
}

@article{ABS15,
  title={Subexponential algorithms for unique games and related problems},
  author={Arora, Sanjeev and Barak, Boaz and Steurer, David},
  journal={Journal of the ACM (JACM)},
  volume={62},
  number={5},
  pages={1--25},
  year={2015},
  publisher={ACM New York, NY, USA}
}

@article{kumar1992algorithms,
  title={Algorithms for constraint-satisfaction problems: A survey},
  author={Kumar, Vipin},
  journal={AI magazine},
  volume={13},
  number={1},
  pages={32--32},
  year={1992}
}

@article{steurer2010subexponential,
  title={Subexponential algorithms for d-to-1 two-prover games and for certifying almost perfect expansion},
  author={Steurer, David},
  journal={Available at the author’s website},
  volume={1},
  pages={2--1},
  year={2010}
}

@article{yaroslavtsev2014going,
  title={{Going for speed: Sublinear algorithms for dense r-CSPs}},
  author={Yaroslavtsev, Grigory},
  journal={arXiv preprint arXiv:1407.7887},
  year={2014}
}

@inproceedings{yoshida2014approximation,
  title={{Approximation schemes via Sherali-Adams hierarchy for dense constraint satisfaction problems and assignment problems}},
  author={Yoshida, Yuichi and Zhou, Yuan},
  booktitle={Proceedings of the 5th conference on Innovations in theoretical computer science},
  pages={423--438},
  year={2014}
}

@inproceedings{arora1995polynomial,
  title={Polynomial time approximation schemes for dense instances of NP-hard problems},
  author={Arora, Sanjeev and Karger, David and Karpinski, Marek},
  booktitle={Proceedings of the twenty-seventh annual ACM symposium on Theory of computing},
  pages={284--293},
  year={1995}
}

@article{fernandez1996max,
  title={MAX-CUT has a randomized approximation scheme in dense graphs},
  author={Fernandez De La Vega, W},
  journal={Random Structures \& Algorithms},
  volume={8},
  number={3},
  pages={187--198},
  year={1996},
  publisher={Wiley Online Library}
}

@article{braunstein2005survey,
  title={Survey propagation: An algorithm for satisfiability},
  author={Braunstein, Alfredo and M{\'e}zard, Marc and Zecchina, Riccardo},
  journal={Random Structures \& Algorithms},
  volume={27},
  number={2},
  pages={201--226},
  year={2005},
  publisher={Wiley Online Library}
}

@article{arora2016combinatorial,
  title={A combinatorial, primal-dual approach to semidefinite programs},
  author={Arora, Sanjeev and Kale, Satyen},
  journal={Journal of the ACM (JACM)},
  volume={63},
  number={2},
  pages={1--35},
  year={2016},
  publisher={ACM New York, NY, USA}
}

\end{document}